\documentclass[journal,12pt,draftclsnofoot, onecolumn]{IEEEtran}
\IEEEoverridecommandlockouts

\usepackage{ifpdf}
\usepackage{cite}
\usepackage{placeins}
\usepackage[pdftex]{graphicx}
\usepackage{array}
\usepackage{caption}
\usepackage{subcaption}
\usepackage{fixltx2e}
\usepackage{stfloats}
\usepackage{hyperref}
\hypersetup{pdfborder = {0 0 0}}
\usepackage{amsthm}
\usepackage{amsmath}
\usepackage{amsfonts}
\usepackage{amssymb}
\usepackage[T1]{fontenc} 
\usepackage[cmintegrals]{newtxmath}
\usepackage{bm} 
\usepackage[all]{xy}
\usepackage{enumitem}
\usepackage{float}
\interdisplaylinepenalty=2500
\usepackage[usenames, dvipsnames]{color}
\usepackage{ifpdf}
\usepackage{cite}
\usepackage{array}
\usepackage{mathtools}

\makeatletter
\def\widebreve{\mathpalette\wide@breve}
\def\wide@breve#1#2{\sbox\z@{$#1#2$}%
	\mathop{\vbox{\m@th\ialign{##\crcr
				\kern0.08em\brevefill#1{0.8\wd\z@}\crcr\noalign{\nointerlineskip}%
				$\hss#1#2\hss$\crcr}}}\limits}
\def\brevefill#1#2{$\m@th\sbox\tw@{$#1($}%
	\hss\resizebox{#2}{\wd\tw@}{\rotatebox[origin=c]{90}{\upshape(}}\hss$}
\makeatletter

\usepackage{url}
\usepackage{mathtools}
\usepackage[dvipsnames]{xcolor}
\usepackage{multirow}
\usepackage{mwe} 
\usepackage{times}
\usepackage{wasysym}
\usepackage{epsfig}
\usepackage{multirow}
\usepackage{tabularx}
\usepackage{graphicx}
\usepackage{color}
\usepackage{xspace}
\usepackage{thumbpdf}
\usepackage{listings}
\usepackage{verbatim}
\usepackage{outlines}
\usepackage{textgreek}
\usepackage{booktabs}
\usepackage{amsfonts}
\usepackage{colortbl}
\usepackage{multicol}
\usepackage{multirow}
\usepackage{balance}
\usepackage{makecell}
\usepackage{array}
\usepackage[normalem]{ulem}
\usepackage{mathtools,amssymb}
\usepackage{float} 
\usepackage{booktabs} 
\usepackage{wrapfig}
\usepackage{subfloat}

\usepackage{diagbox}

\theoremstyle{definition}

\newtheorem{thm}{Theorem}

\newtheorem{lem}{Lemma}

\newtheorem{define}{Definition}

\newcommand{\no}{\nonumber}


\hyphenation{op-tical net-works semi-conduc-tor}

\usepackage[noend]{algpseudocode}

\usepackage{diagbox}

\theoremstyle{definition}


\hyphenation{op-tical net-works semi-conduc-tor}

\DeclareMathOperator*{\argmax}{arg\,max}

\begin{document}
\title{Superstring-Based Sequence Obfuscation to Thwart Pattern Matching Attacks\\}


\author{Bo~Guan~\IEEEmembership{Student Member,~IEEE,}
	Nazanin~Takbiri~\IEEEmembership{Student Member,~IEEE,}
	Dennis~Goeckel~\IEEEmembership{Fellow,~IEEE,}\\
	Amir~Houmansadr~\IEEEmembership{Member,~IEEE,}
	Hossein~Pishro-Nik~\IEEEmembership{Member,~IEEE}

	\thanks{B. Guan, N. Takbiri, D. L. Goeckel, and H. Pishro-Nik are with the Department
		of Electrical and Computer Engineering, University of Massachusetts, Amherst,
		MA, 01003 USA. e-mail: \{boguan, ntakbiri, goeckel, pishro\}@ecs.umass.edu}
	\thanks{A. Houmansadr is with the College of Information and Computer Sciences, University of Massachusetts, Amherst, MA, 01003 USA. e-mail:(amir@cs.umass.edu).}
	\thanks{This work was supported by the National Science Foundation under grants CCF--1421957 and CNS--1739462.}
	\thanks{This work was presented in part in IEEE International Symposium on Information Theory (ISIT 2020)~\cite{guan2020sequence}.}
	\thanks{This work has been submitted to the IEEE for possible publication. Copyright may be transferred without notice, after which this
version may no longer be accessible.}
}

	
\maketitle

\begin{abstract}
User privacy can be compromised by matching user data traces to records of their previous behavior. The matching of the statistical characteristics of traces to prior user behavior has been widely studied. However, an adversary can also identify a user deterministically by searching data traces for a pattern that is unique to that user. Our goal is to thwart such an adversary by applying small artificial distortions to data traces such that each potentially identifying pattern is shared by a large number of users. Importantly, in contrast to statistical approaches, we develop data-independent algorithms that require no assumptions on the model by which the traces are generated. By relating the problem to a set of combinatorial questions on sequence construction, we are able to provide provable guarantees for our proposed constructions. We also introduce data-dependent approaches for the same problem. The algorithms are evaluated on synthetic data traces and on the Reality Mining Dataset to demonstrate their utility.
\end{abstract}

\begin{IEEEkeywords}
	Anonymization, information-theoretic privacy, Internet of Things (IoT), obfuscation, Privacy Preserving Mechanism (PPM), statistical matching, \textit{superstring}.
\end{IEEEkeywords}
\section{Introduction}
\label{intro}

The prominence of the Internet of Things (IoT) has raised security and privacy concerns.  The problem considered here addresses several important scenarios: fingerprinting webpages visited by users through anonymous communication systems \cite{shirani2018optimal} \cite{wondracek2010practical}, linking communicating parties on messaging applications \cite{danezis2009vida}, and inferring the activities of the users of IoT devices \cite{junges2019passive} \cite{apthorpe2017spying}.  While the setting is general, we motivate the problem from the consideration of \textit{User-Data Driven} (UDD) services in IoT applications: data submitted by users is analyzed to improve service in applications such as health care, smart homes, and connected vehicles.  But privacy and security threats are a major obstacle to the wide adoption of IoT applications~\cite{staff2015internet,sadeghi2015security,wang2017estimation, ukil2014iot, diaz2019robustness}.  
Often anonymization and obfuscation mechanisms are proposed to improve privacy at the cost of user utility.  Anonymization techniques frequently change the pseudonym of users~\cite{shirani2018optimal, hoh2005protecting, unnikrishnan2014asymptotically, keen, Naini2016, tifs2016}, whereas obfuscation techniques add noise to users' data samples~\cite{yoshida2019optimal,obf_1,obf_2,obf_3,randomizedresponse, takbiri2018matching, takbiri2017limits}.

Privacy can be compromised by linking the characteristics of a target sequence of activities to previously observed user behavior.  To provide privacy guarantees in the presence of such potential \emph{sequence matching}, a stochastic model for the users' data (e.g., Markov chains) has been generally assumed~\cite{mangold1999applying,juang1990segmental,farhad, negar_1, negar_2, Nazanin_WCNC2019}, and privacy attacks that match the statistical characteristics of the target sequence to those of past sequences of the user are considered.  These previous approaches have two limitations:  (i) many privacy attacks are based on simple ``pattern matching" for identification~\cite{keen}, classification~\cite{becker2011route, xu2011acculoc}, or prediction~\cite{eagle2009methodologies}, where the adversary (algorithm) looks (deterministically) for a specific ordered sequence of values in the user's data; and, (ii) as \textit{Privacy-Protection Mechanism} (PPM) designers, we may not know the underlying statistical model for users' data.  In particular, Takbiri et al.~\cite{takbiri2018matching} have shown that modeling errors can destroy privacy guarantees. 

We consider the following important question:  Can we thwart privacy attacks that de-anonymize users by finding specific identifying patterns in their data, \emph{even if we do not know what patterns the adversary might be exploiting}, and can we do so without assuming a certain model (or collection of models) for users' data?  Our privacy metric and the resulting obfuscation approach are based on the following idea: \emph{noise should be added in a way that the obfuscated user data sequences are likely to have a large number of common patterns}. This means that for any user and for any potential pattern that the adversary might obtain for that user, there will be a large number of other users with the same data pattern in their obfuscated sequences.  By focusing on this common type of privacy attack (pattern matching), the PPM is able to eliminate the need for making specific assumptions about the users' data model. 



To achieve privacy guarantees, we first introduce a data-independent obfuscation approach, which means that sequence obfuscation can be performed without knowledge of the actual data values in a user's sequence.  Such data independence would be of value in various applications where the upcoming events are not known a priori, for instance in website fingerprinting~\cite{shirani2018optimal} \cite{wondracek2010practical} and flow correlation~\cite{zhu2004flow} applications where obfuscations need to be applied on live (non-buffered) network packets.  Our approach relies on the concept of \textit{superstring}s, which contain every possible pattern of length less than or equal to the pattern length $l$ (repeated symbols are allowed in each contiguous substring). This in turn happens to be related to a rich area in combinatorics \cite{newey1973notes,radomirovic2012construction,johnston2013minimal,houston2014tackling}. 
After introducing and characterizing our data-independent obfuscation approach, we introduce data-dependent obfuscation approaches for comparison; data-dependent approaches are able to look at the values in the users' data sequences and base their obfuscation on such information.  Such data dependence would be possible in applications where the whole vector of user data is known to the obfuscation party at the obfuscation time, for instance image processing applications. 

The contributions of our paper are:
\begin{itemize}
\item We propose a formal framework for defending against pattern matching attacks when there is no statistical model for the user data (Section~\ref{sec:framework}).
\item We present a data-independent obfuscation approach based on \textit{superstring}s and, by lower bounding its performance for two different types of \textit{superstring}s, prove that it yields a non-zero fraction of user sequences that contain a potentially identifying pattern (Section~\ref{sec:simple-ppm}).
\item We develop data-dependent obfuscation approaches for our pattern matching framework (Section~\ref{sec:data-dependent}).
\item We validate the developed approaches on both synthetic data and the Reality Mining dataset to demonstrate their utility and compare their performance (Section~\ref{sec:numerical}).
\end{itemize}



Finally, we present the conclusions that can be drawn from our study in Section~\ref{conclusion}.

\section{Related Work}
\label{related_work}
IoT devices provide important services but they have multiple potential privacy issues: 1) enabling unauthorized access and misuse of personal information~\cite{ukil2014iot},~\cite{pan2018flowcog}, 2) facilitating attacks on other systems~\cite{staff2015internet},~\cite{sadeghi2015security}, 3) creating personal privacy and safety problems~\cite{pan2015not},~\cite{lin2016iot},~\cite{zheng2018user}.  

Anonymization conceals the mapping between users' identity and data by periodically changing the mapping to prevent statistical inference attacks. The $k$-anonymity protection approach is proposed in~\cite{samarati1998generalizing},~\cite{sweeney2002k}, which guarantees that the information for any person contained in the released version of the data cannot be distinguished from at least $k-1$ individuals. To overcome some drawbacks of $k$-anonymity, $l$-diversity is proposed in~\cite{xiao2006personalized}. Anonymizing social network data, which is more challenging than anonymizing relational data, is studied in~\cite{zhou2008preserving}.  The concept of perfect location privacy is introduced and characterized by Montazeri et al.~\cite{tifs2016}. 



Anonymization is often insufficient because an adversary can track users' identities by using users' trajectory information~\cite{hoh2005protecting},~\cite{keen} or specific patterns~\cite{Naini2016},~\cite{karp1972rapid},~\cite{weiner1973linear}. Privacy preservation could be challenged by de-anonymization attacks if some sensitive information is known by an adversary~\cite{unnikrishnan2014asymptotically},~\cite{zhou2008brief},~\cite{unnikrishnan2013anonymizing}.  Hence, obfuscation techniques protect users' privacy by introducing perturbations into users' data sequences to decrease their accuracy~\cite{ardagna2009obfuscation}.  In \cite{obf_1}, the authors propose an adaptive algorithm which adjusts the resolution of location information along spatial and temporal dimensions. In \cite{obf_2}, a comprehensive solution aimed at preserving location privacy of individuals through artificial perturbations of location information is presented.  The work of \cite{obf_3} provides efficient distributed protocols for generating random noise to provide security against malicious participants. In \cite{randomizedresponse}, a randomized response method is proposed which allows interviewees to maintain privacy while increasing cooperation. 

Data protection mechanisms might limit the utility when data also needs to be shared with an application or the provider to achieve some utility~\cite{wang2017estimation} or a quality of service constraint~\cite{hoh2005protecting}. 
Theoretical analyses of the privacy-utility trade-off (PUT) are provided in~\cite{diaz2019robustness},~\cite{yoshida2019optimal}. A key concept of \textit{relevance} is proposed which strikes a balance between the need of service providers, requiring a certain level of location accuracy, and the need of users, asking to minimize the disclosure of personal location information~\cite{obf_2}. In~\cite{takbiri2018matching},~\cite{takbiri2017limits}, Takbiri et al. derive the theoretical bounds on the privacy versus utility of users when an adversary is trying to perform statistical analyses on time series to match the series to user identity. 

Pattern matching problems have attracted researchers in recent years, in particular fast pattern matching~\cite{karp1972rapid},~\cite{weiner1973linear}, database search~\cite{faloutsos1994fast},~\cite{ma2011network},~\cite{al2002structural}, secure pattern matching~\cite{yasuda2013secure},~\cite{wang2017privacy},~\cite{baron20125pm}, and identification~\cite{guan2020sequence},~\cite{keen},~\cite{becker2011route},~\cite{rahmat2017file} and characterization~\cite{povinelli2003new, gonzalez2008understanding, isaacman2010tale, keralapura2010profiling} by using patterns or subsequences. 
\section{System Model, Definitions, and Metrics}
\label{sec:framework}

Consider a system with $n$ users whose identification we seek to protect.  Let $X_u(k)$ denote the data of user $u$ at time $k$.  We assume there are $r \geq 2$ possible values for each of the users' data points in a finite size set $\mathcal{R}=\{0,1, \ldots, r-1\}$.
Let $\textbf{X}_u$ be the $m \times 1$ vector containing the data points of user $u$, and $\textbf{X}$ be the $m \times n$ matrix with the $u^{th}$ column equal to $\textbf{X}_u:$
\begin{align}
\no \textbf{X}_u = [X_u(1), X_u(2), \cdots, X_u(m)]^T,\ \ \ \textbf{X} =\left[\textbf{X}_{1}, \textbf{X}_{2}, \cdots, \textbf{X}_{n}\right].
\end{align}
As shown in Fig.~\ref{fig:xyz}, in order to achieve privacy for users, both anonymization and obfuscation techniques are employed. In Fig.~\ref{fig:xyz}, $\textbf{Z}$ denotes the reported data of the users after applying the obfuscation, and $\textbf{Y}$ denotes the reported data after applying the obfuscation and the anonymization, where, with $Z_u(k)$ denoting the obfuscated data of user $u$ at time $k$ and $Y_u(k)$ denoting the obfuscated and anonymized data of user $u$ at time $k$, respectively, $\textbf{Z}$ and $\textbf{Y}$ are defined analogously to $\textbf{X}$.

\begin{figure}[h]
	\centering
	\includegraphics[width = 1\linewidth]{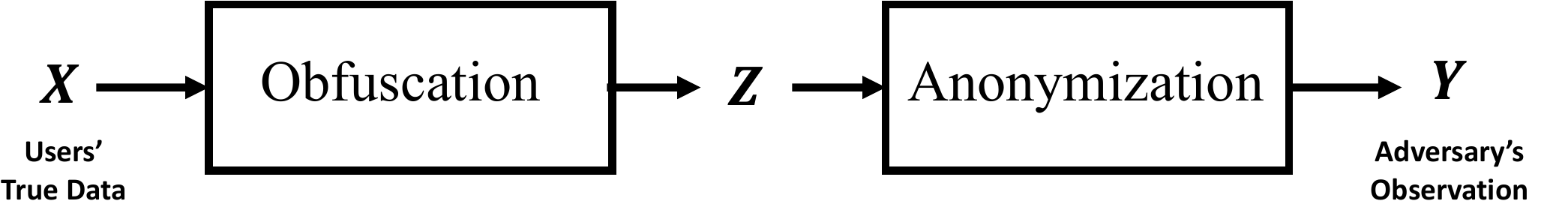}
	\caption{Applying obfuscation and anonymization techniques to the users' data points.}
	\label{fig:xyz}
\end{figure}

Next, we provide a formal definition of a \textit{pattern}.  As an example, a potential pattern could be the sequence of locations that the user normally visits in a particular order: their office, the gym, a child's school.  The visited locations might not necessarily be contiguous in the sequence, but they are close to each other in time. Hence, we impose two conditions on a \textit{pattern}: first, the elements of the pattern sequence must be present in order. Second, consecutive elements of the pattern sequence must appear within \textit{distance} less than or equal to $h$, where the \textit{distance} between two elements is defined as the difference between the indices of those elements ($h \geq 1$).  The parameter $h$ could have value one for the most restricted case: the elements of the pattern sequence must appear consecutively in users' sequences.  And $h$ could be infinity for the unconstrained case: applications which do not consider \textit{distance} for detecting a pattern, e.g., traffic analysis.

\begin{define}
A \textit{pattern} is a sequence $\textbf{Q}=q^{(1)} q^{(2)} \cdots{}q^{(l)}$, where $q^{(i)} \in \{0,1, \cdots, r-1\}$ for all $i\in{}\{1,2,\cdots,l\}$.  A user $u$ is said to have the \textit{pattern} $\textbf{Q}$ if
\begin{itemize}
 \item The sequence $\textbf{Q}$ is a subsequence (not necessarily of consecutive elements) of user $u$'s sequence. 
 \item For each $i \in \{1,2,\cdots, l-1\}$, $q^{(i)}$ and $q^{(i+1)}$ appear in user $u$'s sequence with \textit{distance} less than or equal to $h$.
\end{itemize}
\label{def:pattern}
\end{define}

\hspace{-11.5 pt}\textbf{Obfuscation Mechanism:}
Given the model above and the definition of {$\epsilon-$}\textit{privacy} below, the objective is to design obfuscation schemes for the user data sequences that maximize {$\epsilon-$}\textit{privacy} with minimum sequence distortion without knowing what pattern the adversary might be exploiting or which user the adversary might be targeting.  For simplicity, we consider the case of sparsely sampled data and thus leave to future work the enforcement of consistency constraints on the obfuscated user data sequences, such as a continuity constraint requiring that adjacent sequence elements have similar values.  The design and characterization of obfuscation mechanisms is the main topic of the succeeding sections.

\hspace{-11.5 pt}\textbf{Anonymization Mechanism:}
Anonymization is modeled by a random permutation $\Pi$ on the set of $n$ users, $\mathcal{U} = \{1,2,\cdots,n\}$. Each user $u$ is anonymized by the pseudonym function $\Pi(u)$.  Per above, $\textbf{Y}$ is the anonymized version of $\boldsymbol{Z}$; thus,
\begin{IEEEeqnarray*}{rCl}
\textbf{Y} && = \text{Perm}(\textbf{Z}_1, \textbf{Z}_2, \cdots, \textbf{Z}_n;\Pi)\\
&& = [\textbf{Z}_{\Pi^{-1}(1)}, \textbf{Z}_{\Pi^{-1}(2)}, \cdots, \textbf{Z}_{\Pi^{-1}(n)}]\\
&& = [\textbf{Y}_1, \textbf{Y}_2, \cdots, \textbf{Y}_n],
\end{IEEEeqnarray*}
where $\text{Perm}(~\cdot{}~; \Pi)$ is the permutation operation with permutation function $\Pi$. As a result, $\textbf{Y}_u = \textbf{Z}_{\Pi^{-1}(u)}$ and $\textbf{Y}_{\Pi(u)} = \textbf{Z}_u$.  In practice, our model would arise when anonymization takes place every $m$ samples.  In such a case, it is sufficient to assume sequences of length $m$ and employ the anonymization once to conceal the mapping between users and their data sequences.

\hspace{-11.5 pt}\textbf{Adversary Model:} The adversary has access to a sequence of observations of length$-m$ for each user; in other words, for each $u \in \{1,2, \cdots,n\}$, the adversary observes $Y_{\Pi(u)}(1),Y_{\Pi(u)}(2), \cdots, Y_{\Pi(u)}(m)$. We also assume the adversary has identified a \textit{pattern} $\textbf{Q}_v$ of a specific user $v$, $q^{(1)}_vq^{(2)}_v\cdots q^{(l)}_v$, and is trying to identify the sequence of a user $v$ by finding the sequence with pattern $q^{(1)}_vq^{(2)}_v\cdots q^{(l)}_v$.  The adversary knows the obfuscation and the anonymization mechanisms; however, they do not know the realization of the random permutation $\left(\Pi\right)$ and they do not know the realization of any randomly generated elements of the obfuscation mechanism. 

We define {$\epsilon-$}\textit{privacy} as:
\begin{define}
	User $v$ with data \textit{pattern} $q^{(1)}_vq^{(2)}_v\cdots q^{(l)}_v$ has {$\epsilon-$}\textit{privacy} if for any other user $u$, the probability that user $u$ has \textit{pattern} $q^{(1)}_vq^{(2)}_v\cdots q^{(l)}_v$ in their obfuscated data sequence is at least $\epsilon$.
\end{define}
Loosely speaking, this implies that the adversary cannot identify user $v$ with probability better than $\frac{1}{n\epsilon}$.  If we assume $\epsilon$ is a constant independent of $n$, $\epsilon-$privacy is a strong requirement for privacy - equivalent to {$n\epsilon-$}anonymity in the setting of $k$-anonymity.  In contrast, in perfect privacy \cite{tifs2016, takbiri2017limits, takbiri2018matching} it suffices that each user is confused with $N^{(n)}$ users, where $N^{(n)} \rightarrow \infty$ as $n \rightarrow \infty$.  Hence, we will also consider cases where $\epsilon$ is a decreasing function of $n$ so as to consider less stringent privacy definitions.

\section{Privacy Guarantee for Model-Free PPMs} \label{sec:simple-ppm}

We present constructions for model-free privacy-protection mechanisms under the model of Section \ref{sec:framework} and then characterize their performance. 


\subsection{Constructions}


For any user and for any potential pattern that the adversary might obtain for that user, we want to ensure there will be a large number of other users with the same data pattern in their obfuscated data sequences.  First we define the concept of a \textit{superstring} and then our obfuscation mechanism.

\begin{define}
A sequence is an $(r,l)-$\textit{superstring} if it contains all possible $r^l$ length-$l$ strings (repeated symbols allowed) on a size-$r$ alphabet-$\mathcal{R}$ as its contiguous substrings (cyclic tail-to-head ligation not allowed). \label{def_superstring}
\end{define}
We define $f(r,l)$ as the length of the shortest $(r,l)-$\textit{superstring}. A trivial upper bound is $f(r,l) \leq lr^l$, as $lr^l$ is the length of the $(r,l)-$\textit{superstring} obtained by concatenating all possible $r^l$ substrings.  As an example of a \textit{superstring}, the sequence $11221$ is a $(2,2)-$\textit{superstring} because it contains $11$, $12$, $21$, and $22$ as its contiguous subsequences; thus $f(2,2) \leq 5 \leq 8 = lr^l$. 


\hspace{-11.5 pt}\textbf{Superstring-Based Obfuscation (SBU):}
Recall that $\textbf{Z}_u$ is the $m \times 1$ vector of the obfuscated version of user $u$'s data sequence, and $\textbf{Z}$ is the $m \times n$ matrix with $u^{th}$ column $\textbf{Z}_u$:
\begin{align}
\no \textbf{Z}_u = [Z_u(1), Z_u(2), \cdots, Z_u(m)]^T,\ \ \ \textbf{Z} =\left[\textbf{Z}_{1},  \textbf{Z}_{2}, \cdots, \textbf{Z}_{n}\right].
\end{align}

The basic procedure is shown in Fig.~\ref{Obfus_General}.  For each user, we independently and randomly generate an $(r,l)-$\textit{superstring} from the \textit{superstring} solution set described below.  We denote the generated $(r,l)-$\textit{superstring} as $\boldsymbol{a}_u = \{a_u(1),a_u(2), \cdots, a_u({L_s}) \}$, where $L_s $ is the length of the generated \textit{superstring}.  The parameter $p_{\text{obf}}$ is the probability that we will change a given data sample.  Thus, for each data point of each user, we independently generate a Bernoulli random variable ${W}_u(k)$ with parameter $p_{\text{obf}}$. As shown in Fig.~\ref{Obfus_General}, the obfuscated version of the data sample of user $u$ at time $k$ can then be written as:
\begin{align}
\no Z_u(k) = \left\{
\begin{array}{rl}
X_u(k), & \text{if }W_u(k)=0 \\ \ a_u(j), & \text{if }W_u(k)=1,
 \end{array} \right.
\end{align}
	where $j = \sum\limits_{k'=1}^{k}W_u(k')$, and $a_u(j)$ is the $j^{th}$ element of the $(r,l)-$\textit{superstring} used for the obfuscation.
If the length of the generated $(r,l)-$\textit{superstring} is not sufficient (i.e. $\sum\limits_{k'=1}^m W_u(k') > L_s$), we choose another \textit{superstring} at random to continue. 

\begin{figure}[htbp]
	\centerline{\includegraphics[width=1\linewidth]{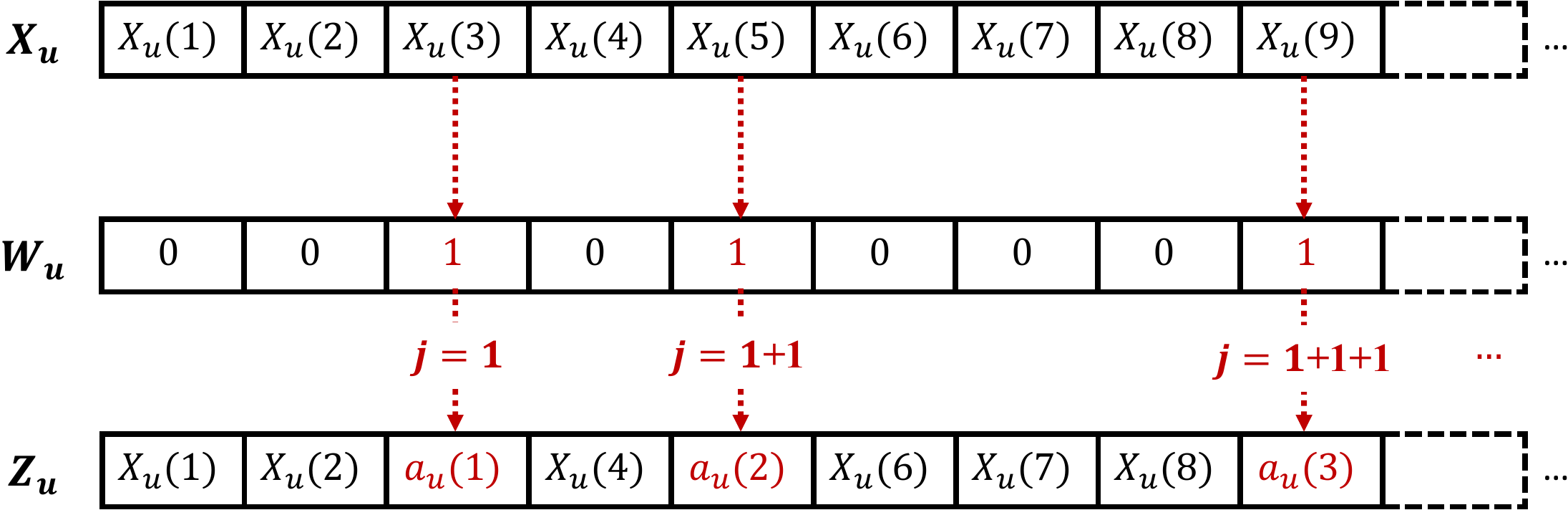}}
	\caption{The obfuscation of the data sequence of user $u \in \mathcal{U}$ based on an $(r,l)-$\textit{superstring}.}
	\label{Obfus_General}
\end{figure}


\hspace{-11.5 pt}\textbf{Independent and Identically Distributed (i.i.d.) Obfuscation:}

In ~\cite{takbiri2018matching,nazanin_ISIT2018, ISIT18-longversion} a \emph{uniform i.i.d.} obfuscation mechanism is used. For each user, an i.i.d. sequence of random variables $\mathbf{b_u}= \{b_u(1),b_u(2), \cdots \}$ uniformly distributed on the alphabet $\mathcal{R}=\{0,1, \ldots, r-1\}$ is generated.  These values are used to obfuscate the sequence $X_u(k)$: for each data point of each user, we independently generate a Bernoulli random variable ${W}_u(k)$ with parameter $p_{\text{obf}}$. The obfuscated version of the data sample of user $u$ at time $k$ can be written as:
\begin{align}
\no Z_u(k) = \left\{
\begin{array}{rl}
X_u(k), & \text{if }W_u(k)=0 \\ \ b_u(j), & \text{if }W_u(k)=1,
 \end{array} \right.
\end{align}
where $j = \sum\limits_{k'=1}^{k}W_u(k')$.  The i.i.d. obfuscation will be a benchmark for comparison of our superstring-based approaches.


\subsection{Analysis}

Without loss of generality, consider $\epsilon-$privacy for user $1$ with pattern sequence $q^{(1)}_1q^{(2)}_1\cdots{}q^{(l)}_1$. 
The pattern length $l$ and the maximum \textit{distance} $h$ between the appearance of pattern elements are assumed to be known and treated as constants, but we hasten to note that this defends against an attacker employing a pattern with length less than or equal to $l$ and maximum \textit{distance} greater than or equal to $h$.  

We assume a worst-case scenario: user 1 has a pattern unique to their data set that can be exploited for identification. 
We start with the upper bound $lr^l$ for the length of an $(r,l)-$\textit{superstring}.  We will prove that such a \textit{superstring} guarantees that at least a certain fraction $\epsilon$ of users will have the same pattern as user $1$ after employing the obfuscation mechanism.  Later, we will improve this result by introducing the De Bruijn sequence to shorten the \textit{superstring}. 

\begin{define}
Let $\mathcal{B}_u$ be the event that the obfuscated sequence $\textbf{Z}_u$ has user $1$'s identifying pattern due to obfuscation by an $(r,l)-$\textit{superstring} with length $lr^l$ obtained by concatenating all possible $r^l$ substrings.
\end{define}

\begin{thm} \label{thm:simple-ppm-privacy-guarantee}
The probability of $\mathcal{B}_u$, denoted by $\mathbb{P}\left(\mathcal{B}_u\right)$, is lower bounded by a constant that does not depend on $n$ as:
 \begin{align}
\mathbb{P}\left(\mathcal{B}_u\right) \geq
\frac{\left( 1-\left( 1-p_{\text{obf}}\right)^h \right)^{(l-1)}}{r^l} \sum_{\alpha=0}^{\min \left\{(r^l-1), \big{\lfloor} \frac{Gp_{\text{obf}}}{l}\big{\rfloor} \right\} } \hspace{-25 pt}1-\exp \left(-\frac{ \delta_\alpha^2}{2} Gp_{\text{obf}} \right), \label{neq_th1}
\end{align}
where
\[G=m-h(l-1), \ \ \ \delta_\alpha=1-\frac{\alpha l}{Gp_{\text{obf}}},~~ \text{ for } \alpha=0,1,\cdots, r^l-1. \]
\end{thm}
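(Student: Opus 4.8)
The plan is to exhibit a single sufficient event for $\mathcal{B}_u$ whose probability factors cleanly into the three quantities appearing in \eqref{neq_th1}. First I would invoke the defining property of the superstring: since it is the concatenation of all $r^l$ length-$l$ blocks, user $1$'s pattern $q^{(1)}_1 q^{(2)}_1 \cdots q^{(l)}_1$ occurs as exactly one of these blocks. Because the superstring is drawn uniformly at random from the solution set (equivalently, the $r^l$ blocks are placed in a uniformly random order), the zero-indexed block position $\alpha$ of user $1$'s pattern is uniform on $\{0,1,\ldots,r^l-1\}$; this is the source of both the factor $1/r^l$ and the outer sum over $\alpha$.

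Next I would condition on the pattern block sitting at position $\alpha$, so that its symbols are destined for the $(\alpha l + 1)$-th through $(\alpha l + l)$-th obfuscation events of user $u$. I then isolate a sufficient condition for $\mathcal{B}_u$: (i) at least $\alpha l + 1$ of the $\Bern(p_{\text{obf}})$ trials $W_u(k)$ succeed within the first $G = m - h(l-1)$ positions, so that the first symbol of the block lands at an index at most $G$; and (ii) the next $l-1$ inter-success gaps are each at most $h$. Condition (i) guarantees the block is written at all, while condition (ii) simultaneously enforces the $l-1$ consecutive-distance requirements of Definition~\ref{def:pattern} and, because the starting index is at most $G$ and the block then spans at most a further $h(l-1)$ indices, guarantees that all $l$ symbols fall inside $[1,m]$.

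The crux, and the step I expect to demand the most care, is the decoupling of (i) and (ii), since the number of successes and the sizes of the gaps are both functions of the same sequence $W_u$. The device that removes the overlap is to count successes only within the first $G$ positions, writing $S \sim \Bino(G, p_{\text{obf}})$, so that condition (i) becomes $\{S > \alpha l\}$; meanwhile the inter-success gaps are i.i.d. geometric and, by memorylessness, the $l-1$ gaps following the $(\alpha l+1)$-th success are independent of $S$. Each such gap is at most $h$ with probability $1-(1-p_{\text{obf}})^h$ (a geometric tail), so condition (ii) has probability $\big(1-(1-p_{\text{obf}})^h\big)^{l-1}$ independently of (i). This independence is exactly what lets the three factors multiply, and restricting the count to the first $G$ positions is simultaneously what keeps the written pattern inside the window $[1,m]$.

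Finally I would lower-bound $\mathbb{P}(S > \alpha l)$ by the multiplicative Chernoff bound $\mathbb{P}\big(S \le (1-\delta)\mu\big) \le \exp(-\delta^2\mu/2)$ with $\mu = G p_{\text{obf}}$ and $1-\delta_\alpha = \alpha l/(Gp_{\text{obf}})$, which yields $1 - \exp\big(-\tfrac{\delta_\alpha^2}{2}Gp_{\text{obf}}\big)$. This estimate is valid only for $\delta_\alpha \in (0,1)$, i.e. $\alpha l < Gp_{\text{obf}}$, forcing the upper limit $\lfloor Gp_{\text{obf}}/l\rfloor$; together with the trivial constraint $\alpha \le r^l - 1$ this produces the $\min\{\cdot,\cdot\}$ in the summation, and discarding the remaining nonnegative terms only weakens the inequality. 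Summing over $\alpha$ with weight $1/r^l$ and pulling out the common factor $\big(1-(1-p_{\text{obf}})^h\big)^{l-1}$ then assembles \eqref{neq_th1} exactly. Note that the bound never references user $u$'s data, which is the point: the guarantee is data-independent.
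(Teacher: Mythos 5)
Your proposal is correct and follows essentially the same route as the paper's own proof: the uniform position of the pattern block among the $r^l$ concatenated blocks gives the $1/r^l$ factor and the sum over $\alpha$, the sufficient event splits into ``the $(\alpha l+1)$-th obfuscation success occurs within the first $G=m-h(l-1)$ positions'' and ``the next $l-1$ geometric gaps are each at most $h$,'' and the first factor is bounded by the same multiplicative Chernoff bound with $\delta_\alpha = 1-\alpha l/(Gp_{\text{obf}})$. Your explicit justification of the decoupling via memorylessness (and of why the $G$-window plus the $h(l-1)$ span keeps the pattern inside $[1,m]$) merely spells out what the paper asserts when it says the two events are ``defined so as to make them independent.''
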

\begin{proof}
The notation employed here within the procedure of obfuscation for user $u\in{}\mathcal{U}$ is shown in Fig.~\ref{obfu-fig}.  Note that our generated \textit{superstring} can have more than one copy of each pattern, but we pessimistically focus on one copy of our desired pattern.  We denote $L_{u,1}$ as the index of the first element of the pattern of the \textit{superstring} of user $u$, such that $a_u(L_{u,1}) = q^{(1)}_1, a_u(L_{u,1}+1) = q^{(2)}_1,\cdots,a_u(L_{u,1}+l-1) = q^{(l)}_1$, and correspondingly, $M^i_{u,1}$ is the index of the data point $X_u(M^{(i)}_{u,1})$ that is obfuscated to $q^{(i)}_1$ ($M^{(i)}_{u,1} < m$), for $i=1,2,\ldots,l$:
 \begin{equation}
 Z_u(M_{u,1}^{(i)}) = a_u(L_{u,1} + i-1) = q_1^{(i)}, \text{ for any } u \in \mathcal{U}.
 \end{equation}
The sequences $X_u$ and $Z_u$ can be assumed to be infinitely long with the adversary only seeing the first $m$ elements of $Z_u$. Therefore, a sufficient condition for $\mathcal{B}_u$ (according to Definition~\ref{def:pattern}) is $\mathcal{E}_u \bigcap \mathcal{F}_u$, where:

\begin{align}
\mathcal{E}_u: M^{(1)}_{u,1} \leq m - h(l-1) = G,\ \
\label{e_n4}
\end{align}
\begin{align}
\mathcal{F}_u: D^{(1)}_u \leq h; D^{(2)}_u \leq h;\cdots,D^{(l-1)}_u\leq h,\ \
\label{f_n4}
\end{align}
where $D^{(i)}_u = M^{(i+1)}_{u,1} - M^{(i)}_{u,1}$ are the \textit{distance}s between $q^{(i+1)}_1$ and $q_1^{(i)}$ in user $u$'s obfuscated sequence $\boldsymbol{Z}_u$, for $i=1,2,\ldots,l-1$.
Note that we have defined $\mathcal{E}_u$ and $\mathcal{F}_u$ so as to make them independent.  Thus, we have
\begin{equation}
\mathbb{P}\left(\mathcal{B}_u\right) \geq \mathbb{P}\left(\mathcal{E}_u\right) \mathbb{P}\left(\mathcal{F}_u\right). \label{greater}
\end{equation}





The probability of event $\mathcal{E}_u$ is the probability of $L_{u,1}$ successes in $M$ Bernoulli trials, where each trial has probability of success $p_{\text{obf}}$. 
Since each user employs a randomly chosen \textit{superstring} for obfuscation, the pattern is equally likely to be in any of the $r^l$ substrings of length $l$; hence,
\begin{align}
\mathbb{P}\left(L_{u,1} = \alpha l +1\right) &= \frac{1}{r^l}, \quad \alpha= 0,1,\cdots,r^l-1. \ \
\label{e_n}
\end{align}
Thus, by employing the Law of Total Probability, we have:
\begin{align}
\no \mathbb{P}\left(\mathcal{E}_u\right) &= \sum_{\alpha=0}^{r^l-1}\mathbb{P}\left(\text{at least }L_{u,1}\text{ success in }G\text{ trials} \Big{|} L_{u,1} =\alpha l+1\right) \\
\no &\hspace{24 pt}\cdot  \mathbb{P}\left(L_{u,1}=\alpha l+1\right) \\
 \no &= \frac{1}{r^l}\sum_{\alpha=0}^{r^l-1}\mathbb{P}\left(\text{at least }\alpha l+1\text{ success in }G\text{ trials}\right) \\
 \no &= \frac{1}{r^l}\sum_{\alpha=0}^{r^l-1}\left[1- \mathbb{P}\left(\text{less than }\alpha l+1\text{ success in }G\text{ trials}\right)\right].
\end{align}
Define $\mathcal{A}_\alpha$ as the event that there exists less than $\alpha l+1$ successes in $G$ trials. By employing the Chernoff Bound:
\begin{align}
p(\mathcal{A}_\alpha) \leq \exp\left(-\frac{1}{2}\delta_\alpha^2Gp_{\text{obf}}\right), \quad \text{for all } \alpha < \frac{Gp_{\text{obf}}}{l}.\ \
\label{chern}
\end{align}
Now, by using (\ref{e_n}) and (\ref{chern}):
\begin{equation}
\mathbb{P}\left(\mathcal{E}_u\right) \geq \frac{1}{r^l}\sum_{\alpha=0}^{\min\left\{(r^l-1), \big{\lfloor}{}\frac{Gp_{\text{obf}}}{l}\big{\rfloor}
\right\}}1 - \exp\left(-\frac{1}{2}\delta_\alpha^2Gp_{\text{obf}}\right).\ \label{p_E_u}
\end{equation}
Note that sub-events of $\mathcal{F}_u$: $D_u^{(1)} \leq h,\cdots,D_u^{(l-1)}\leq h$ are independent; thus, the probability of event $\mathcal{F}_u$ is:
\begin{align}
\mathbb{P}\left(\mathcal{F}_u\right) = \prod\limits_{i=1}^{l-1} \mathbb{P}\left(D_u^{(i)} \leq h\right) =\left(1 - (1-p_{\text{obf}})^h\right)^{(l-1)}. \label{p_F_u}
\end{align}
Thus, by \eqref{greater}, \eqref{p_E_u} and \eqref{p_F_u}, we obtain \eqref{neq_th1}.
\end{proof}

\begin{figure}[htbp]
	\centerline{\includegraphics[width=1.0\linewidth]{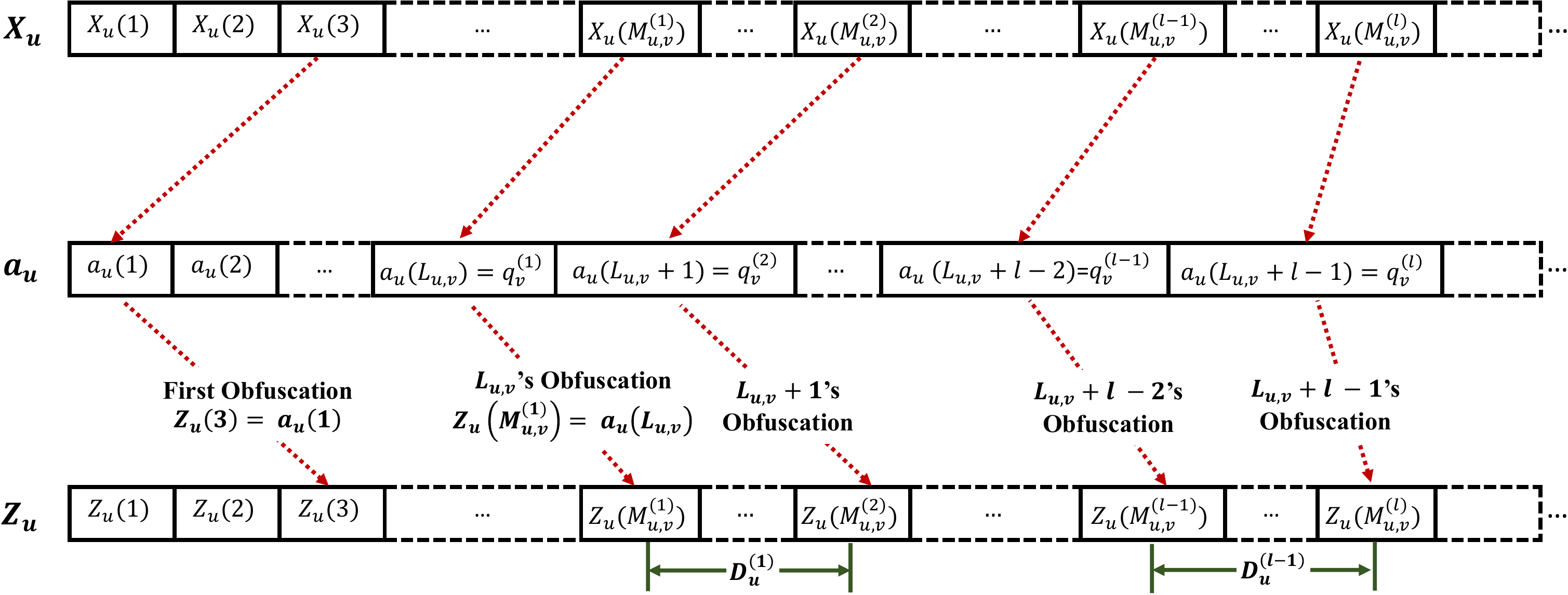}}
	\caption{The notation for the proof of Theorem 1 in the obfuscation of the trace of user $u \in \mathcal{U}$.}
	\label{obfu-fig}
\end{figure}

The methodology of Theorem~\ref{thm:simple-ppm-privacy-guarantee} can be applied with $(r,l)-$\textit{superstring}s of shorter length for stronger privacy guarantees. The following lemma provides a construction for the shortest $(r,l)-$\textit{superstring} and evaluates its length.

\begin{lem}
The length of the shortest $(r,l)-$\textit{superstring} is equal to $r^l + l - 1$; that is, $f(r,l) = r^l + l - 1$.
\label{lem1}
\end{lem}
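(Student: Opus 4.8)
The plan is to prove the two matching bounds $f(r,l) \geq r^l + l - 1$ and $f(r,l) \leq r^l + l - 1$ separately. The lower bound follows from a counting argument. Any sequence of length $L$ has exactly $L - l + 1$ contiguous substrings of length $l$, namely those beginning at positions $1, 2, \ldots, L - l + 1$. Since an $(r,l)-$\emph{superstring} must realize all $r^l$ distinct length-$l$ strings among these substrings, the $L - l + 1$ available substrings must include at least $r^l$ distinct strings, forcing $L - l + 1 \geq r^l$. Hence $L \geq r^l + l - 1$, and therefore $f(r,l) \geq r^l + l - 1$.

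For the upper bound, I would exhibit a \emph{superstring} attaining this length via a \emph{De Bruijn sequence}. The idea is to build the directed De Bruijn graph $\mathcal{G}$ whose vertex set is the collection of all $r^{l-1}$ strings of length $l-1$ over $\mathcal{R}$, and whose edges are the $r^l$ strings of length $l$: the string $s_1 s_2 \cdots s_l$ is the edge directed from the vertex $s_1 \cdots s_{l-1}$ to the vertex $s_2 \cdots s_l$. Each vertex has out-degree $r$ (append any of the $r$ symbols) and in-degree $r$ (prepend any of the $r$ symbols), so $\mathcal{G}$ is balanced; moreover $\mathcal{G}$ is strongly connected, since from any length-$(l-1)$ string one can reach any other by shifting in the symbols of the target one at a time. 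By Euler's theorem, $\mathcal{G}$ admits an Eulerian circuit, and traversing this circuit visits each of the $r^l$ edges, equivalently realizes each of the $r^l$ length-$l$ strings, exactly once, yielding a cyclic sequence of length $r^l$. Writing this cyclic sequence linearly and appending its first $l-1$ symbols to the end produces a linear sequence of length $r^l + l - 1$ in which every length-$l$ string appears as a contiguous substring. This is the desired $(r,l)-$\emph{superstring}, so $f(r,l) \leq r^l + l - 1$, and combining the bounds gives $f(r,l) = r^l + l - 1$.

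The main obstacle lies in the upper bound, specifically in establishing the existence of the Eulerian circuit. The balance condition is immediate from the degree count above, but invoking Euler's theorem also requires verifying that $\mathcal{G}$ is strongly connected, which I would handle with the explicit shifting argument sketched above. One must also carry out the bookkeeping confirming that converting the cyclic De Bruijn sequence into a linear string costs exactly $l-1$ extra symbols: the $l-1$ length-$l$ substrings that cyclically wrap around the seam are precisely the ones recovered by appending the first $l-1$ symbols, so no length-$l$ string is lost in the linearization.
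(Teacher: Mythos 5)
Your proof is correct and follows essentially the same route as the paper: the lower bound via counting the $L-l+1$ starting positions of length-$l$ substrings, and the upper bound by linearizing a De Bruijn sequence $B(r,l)$ and appending its first $l-1$ symbols. The only difference is that you additionally prove the existence of De Bruijn sequences via an Eulerian circuit in the De Bruijn graph, whereas the paper simply cites this fact.
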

\begin{figure}[htbp]
	\centerline{\includegraphics[width=0.95\linewidth]{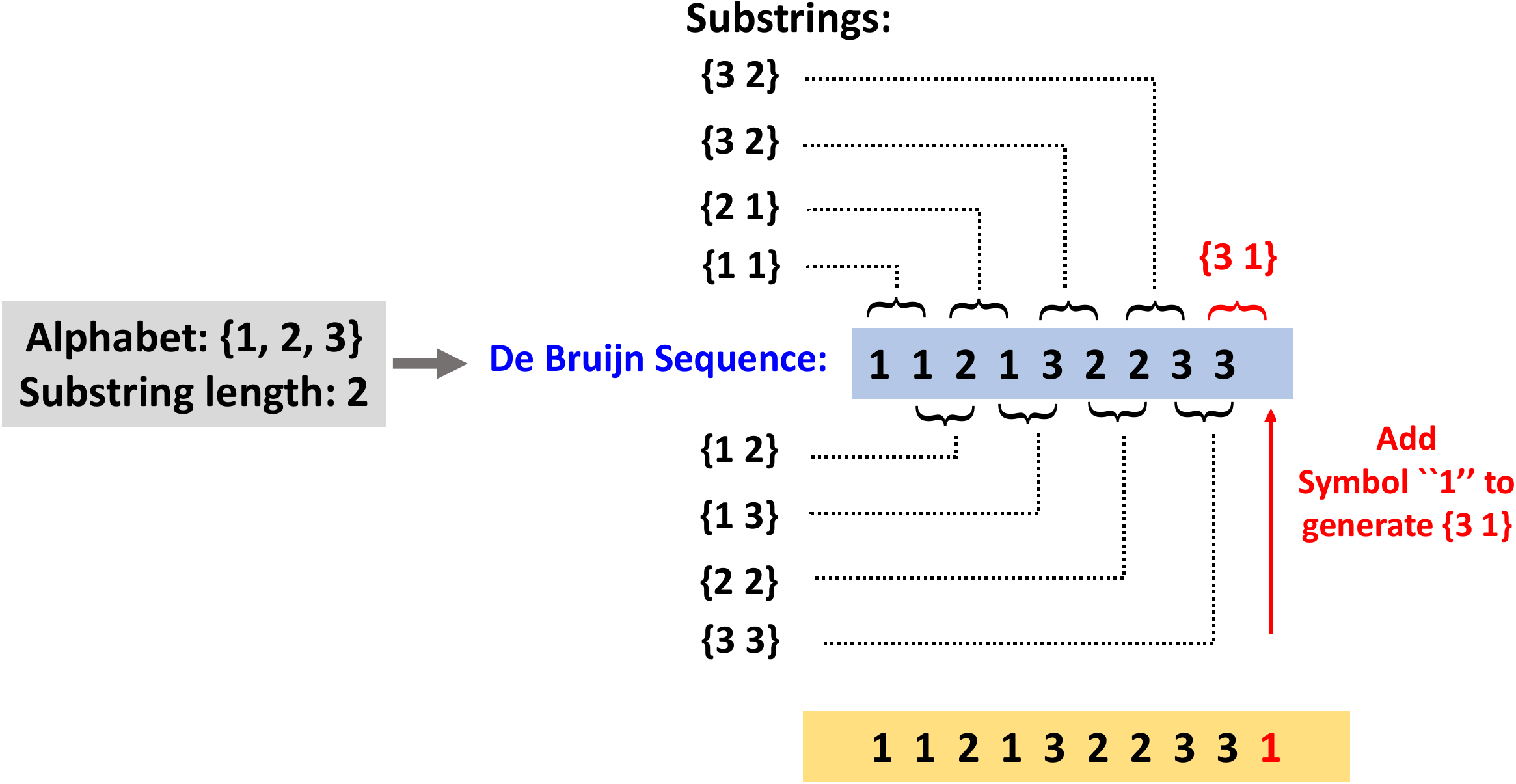}}
	\caption{The construction of a shortest $(3,2)-$\textit{superstring} by using a De Bruijn sequence $B(3,2)$. The length of the constructed $(3,2)-$\textit{superstring} is $f(3,2) = 3^2 + 2-1 = 10$.}
	\label{fig:string2}
\end{figure}
\begin{proof}

We denote by $B(r,l)$ a De Bruijn sequence~\cite{deBruijn1946Acombinatiorials, annexstein1997generating} of order $l$ on a size-$r$ alphabet-$\mathcal{R}$.  A De Bruijn sequence is a sequence with length $r^l$ in which every possible length-$l$ substring on $\mathcal{R}$ occurs exactly once as a contiguous subsequence, given that the last $(l-1)$ and the first $(l-1)$ letters of the De Bruijn sequence form a cyclic tail-to-head ligation for counting the substrings.  

We construct a shortest $(r,l)-$\textit{superstring} with length $(r^l + l - 1)$ from a chosen De Bruijn sequence $B(r,l)$ by repeating $B(r,l)$'s front $(l-1)$ symbols at the end of the sequence.  We first prove that the constructed sequence is an $(r,l)-$\textit{superstring}. The sequence has the first $[r^l - (l-1)]$ substrings because it contains a full De Bruijn sequence $B(r,l)$ in its first $r^l$ symbols. In addition, since the left $(l-1)$ substrings in $B(r,l)$ are counted by tracking from the last $(l-1)$ letters and the first $(l-1)$ letters as mentioned, the left $(l-1)$ substrings also appear in the constructed \textit{superstring} in a non-cyclic way, since the De Bruijn sequence's front $(l-1)$ symbols have been copied to its end. Thus, the constructed sequence contains all possible $r^l$ substrings, and hence, by Definition~\ref{def_superstring}, it is a valid $(r,l)-$\textit{superstring}.

Next we prove that the constructed sequence gives the shortest solution for an $(r,l)-$\textit{superstring}. Each of the distinct substrings on the size-$r$ alphabet-$\mathcal{R}$ must start at a different position in the sequence, because substrings starting at the same position are not distinct. Therefore, an $(r,l)-$\textit{superstring} must have at least $(r^l+l-1)$ symbols. 
\end{proof}
The solution for the shortest $(r,l)-$\textit{superstring} is non-unique in general for $r\geq{}2$ since we can construct our $(r,l)-$\textit{superstring} by taking any De Bruijn sequence $B(r,l)$. 

\hspace{-11.5 pt}\textbf{Shortest-length Superstring-Based Obfuscation (SL-SBU):}
For each user we randomly (uniformly) choose a shortest-length \textit{superstring} (as described above) and employ it for obfuscation.  As noted earlier, if we reach the end of a \textit{superstring}, another one is chosen uniformly at random.

\begin{define}
Let $\mathcal{B}_u'$ be the event that the obfuscated sequence $\textbf{Z}_u$ has user $1$'s identifying pattern due to obfuscation by the shortest $(r,l)-$\textit{superstring} with length $f(r,l)$.
\end{define}

\begin{thm} \label{second-thm}
The privacy performance when the shortest $(r,l)-$\textit{superstring} is employed is given by:
\begin{align}
\mathbb{P}\left(\mathcal{B}_u'\right) \geq 
\frac{\left( 1-\left( 1-p_{\text{obf}}\right)^h \right)^{(l-1)}}{r^l} \sum_{\alpha=0}^{\min \left\{(r^l-1), \big{\lfloor} Gp_{\text{obf}}\big{\rfloor} \right\} }\hspace{-30 pt} 1-\exp \left(-\frac{\delta_\alpha'^2}{2} Gp_{\text{obf}} \right),
\label{eq:thm2}
\end{align}
where
 \[G=m-h(l-1), \ \ \delta_\alpha'=1-\frac{\alpha}{Gp_{\text{obf}}}, \text{ for } \alpha=0,1,\cdots, r^l-1. \]
\end{thm}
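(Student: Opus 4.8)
The plan is to follow the proof of Theorem~\ref{thm:simple-ppm-privacy-guarantee} essentially verbatim, since the obfuscation mechanism, the adversary model, and the sufficient-condition decomposition are all unchanged; only the internal structure of the superstring differs. I would again fix user~$1$ and write a sufficient condition for $\mathcal{B}_u'$ as $\mathcal{E}_u \cap \mathcal{F}_u$, with $\mathcal{E}_u$ and $\mathcal{F}_u$ defined exactly as in \eqref{e_n4} and \eqref{f_n4}, so that $\mathbb{P}(\mathcal{B}_u') \geq \mathbb{P}(\mathcal{E}_u)\,\mathbb{P}(\mathcal{F}_u)$ by independence. The distance event $\mathcal{F}_u$ depends only on the Bernoulli obfuscation variables and not on which superstring is used, so $\mathbb{P}(\mathcal{F}_u) = (1-(1-p_{\text{obf}})^h)^{(l-1)}$ is inherited unchanged from \eqref{p_F_u}. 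The entire difference between the two theorems is therefore confined to the computation of $\mathbb{P}(\mathcal{E}_u)$, and in particular to the distribution of the starting index $L_{u,1}$ of user~$1$'s pattern inside the generated superstring.

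The key structural input is Lemma~\ref{lem1}: the shortest superstring is built from a De Bruijn sequence $B(r,l)$, in which every length-$l$ string appears \emph{exactly once} and consecutive length-$l$ windows overlap in $l-1$ symbols. Consequently the $r^l$ distinct patterns begin at the $r^l$ consecutive positions $1,2,\dots,r^l$, rather than at the widely spaced positions $1,l+1,2l+1,\dots$ produced by plain concatenation in Theorem~\ref{thm:simple-ppm-privacy-guarantee}. Since each user draws a shortest superstring uniformly at random, I would argue that user~$1$'s fixed pattern is equally likely to occupy any of these positions, giving $\mathbb{P}(L_{u,1}=\alpha+1)=\tfrac{1}{r^l}$ for $\alpha=0,1,\dots,r^l-1$. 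Repeating the Law-of-Total-Probability step then yields $\mathbb{P}(\mathcal{E}_u)=\tfrac{1}{r^l}\sum_{\alpha=0}^{r^l-1}\mathbb{P}(\text{at least }\alpha+1\text{ successes in }G\text{ trials})$, where the threshold is now $\alpha+1$ in place of $\alpha l+1$.

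Applying the same Chernoff bound to the lower tail, the required relative deviation becomes $\delta_\alpha'=1-\frac{\alpha}{Gp_{\text{obf}}}$, valid precisely when the threshold stays below the mean, i.e.\ $\alpha<Gp_{\text{obf}}$; this replaces the Theorem~\ref{thm:simple-ppm-privacy-guarantee} condition $\alpha<Gp_{\text{obf}}/l$ and raises the effective upper summation limit to $\min\{r^l-1,\lfloor Gp_{\text{obf}}\rfloor\}$. Substituting this $\mathbb{P}(\mathcal{E}_u)$ together with the unchanged $\mathbb{P}(\mathcal{F}_u)$ into $\mathbb{P}(\mathcal{B}_u')\geq\mathbb{P}(\mathcal{E}_u)\,\mathbb{P}(\mathcal{F}_u)$ reproduces \eqref{eq:thm2}.

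I expect the only genuinely new point to be the justification that $L_{u,1}$ is uniform on $\{1,\dots,r^l\}$. The cleanest way I would secure this is to note that the family of shortest superstrings is closed under cyclic rotation of the underlying De Bruijn sequence, so averaging a fixed pattern's location over the uniform random choice of superstring equidistributes it across the $r^l$ cyclic starting positions, which the overlap structure of Lemma~\ref{lem1} places exactly at $1,\dots,r^l$ in the non-cyclic extended sequence. Care is needed here because, unlike the disjoint-block concatenation of Theorem~\ref{thm:simple-ppm-privacy-guarantee} where uniformity is immediate, the overlapping windows make the position of one pattern correlated with the positions of the others within a single sequence; it is the randomization over the choice of superstring, not the structure of any single one, that restores the marginal uniformity of $L_{u,1}$.
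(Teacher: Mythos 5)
Your proposal is correct and follows essentially the same route as the paper's proof: the decomposition $\mathbb{P}(\mathcal{B}_u') \geq \mathbb{P}(\mathcal{E}_u')\,\mathbb{P}(\mathcal{F}_u')$ with $\mathbb{P}(\mathcal{F}_u')$ inherited unchanged, the uniformity of $L_{u,1}$ over the first $r^l$ positions justified via the uniform random choice among cyclic shifts of the De Bruijn sequence (exactly the paper's appeal to Lemma~\ref{lem1}), and the Chernoff bound with threshold $\alpha+1$ in place of $\alpha l+1$. Your closing remark about why uniformity requires the randomization over superstrings rather than the structure of any single one is a slightly more careful articulation of the point the paper states in one line, but it is the same argument.
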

\begin{proof}
By using (\ref{greater}), we have:
\begin{equation}
\mathbb{P}\left(\mathcal{B}'_u\right) \geq \mathbb{P}\left(\mathcal{E}'_u\right) \mathbb{P}\left(\mathcal{F}'_u\right), \label{greater2}
\end{equation}
where the events $\mathcal{E}'_u$ and $\mathcal{F}'_u$ are defined analogously to the events $\mathcal{E}_u$ and $\mathcal{F}_u$ defined in (\ref{e_n4}) and (\ref{f_n4}), respectively.\\
For a given \textit{superstring} set generated by a De Bruijn sequence $B(r,l)$, we note that the index values $L_{u,1}$ are equally likely over the first $r^l$ indices in the $(r,l)-$\textit{superstring} chosen by user $u$, since one $(r,l)-$\textit{superstring} can be selected by uniformly circular shifting $B(r,l)$ by Lemma~\ref{lem1}. So we have:
\begin{align}
\mathbb{P}\left(L_{u,1} =\alpha+1\right) &= \frac{1}{r^l}, \quad \alpha = 0,1,\cdots,r^l-1. \ \
\label{e_n'}
\end{align}
Similarly, by employing a Chernoff Bound and the Law of Total Probability, we have:
\begin{align}
\mathbb{P}\left(\mathcal{E}_u'\right) \geq \frac{1}{r^l}\sum_{\alpha=0}^{\min\left\{(r^l-1), \big{\lfloor}{}Gp_{\text{obf}} \big{\rfloor}
\right\}}1 - \exp\left(-\frac{1}{2}\delta_\alpha'^2Gp_{\text{obf}}\right). \label{p_E_u'}
\end{align}
In addition, similar to~\eqref{p_F_u}, $\mathbb{P}\left(\mathcal{F}_u'\right) = \mathbb{P}\left(\mathcal{F}_u\right)$, which, combined with (\ref{greater2}) and (\ref{p_E_u'}), leads to \eqref{eq:thm2}.
\end{proof}

\begin{lem}
The lower bounds achieved by Theorem~\ref{thm:simple-ppm-privacy-guarantee} and Theorem~\ref{second-thm} are independent of the data sequence $\textbf{X}$ if the data point set $\mathcal{R}$ is known.
\end{lem}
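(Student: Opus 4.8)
The plan is to argue by inspection of the two preceding proofs that every quantity entering the right-hand sides of \eqref{neq_th1} and \eqref{eq:thm2} is a function of the obfuscation parameters $p_{\text{obf}}$, $h$, $l$, $m$ and the alphabet size $r=|\mathcal{R}|$ alone, and never of the actual data values $X_u(k)$. First I would recall that both lower bounds were obtained not from $\mathbb{P}(\mathcal{B}_u)$ directly but from the sufficient event $\mathcal{E}_u\cap\mathcal{F}_u$ (respectively $\mathcal{E}_u'\cap\mathcal{F}_u'$) through the product bound \eqref{greater}. Thus it suffices to check that $\mathbb{P}(\mathcal{E}_u)$ and $\mathbb{P}(\mathcal{F}_u)$, together with their primed counterparts, are data-independent; note that the lemma does not assert that $\mathbb{P}(\mathcal{B}_u)$ itself is data-independent, only that the stated \emph{bound} is.

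The crux is the observation that the sufficient event $\mathcal{E}_u\cap\mathcal{F}_u$ manufactures the entire target pattern $q_1^{(1)}\cdots q_1^{(l)}$ out of obfuscation-inserted \textit{superstring} symbols placed at the positions $M_{u,1}^{(1)},\ldots,M_{u,1}^{(l)}$ where the Bernoulli indicators $W_u(k)$ equal one. Consequently, the index $L_{u,1}$ of the pattern inside the randomly chosen \textit{superstring} is uniform over the $r^l$ candidate positions regardless of which pattern is sought, giving $\mathbb{P}(L_{u,1}=\alpha l+1)=1/r^l$ in \eqref{e_n} (respectively $\mathbb{P}(L_{u,1}=\alpha+1)=1/r^l$ in \eqref{e_n'}); the event $\mathcal{E}_u$ reduces to a statement about the number of successes among $G$ Bernoulli$(p_{\text{obf}})$ trials, bounded in \eqref{p_E_u} via the Chernoff estimate \eqref{chern}; and the event $\mathcal{F}_u$ reduces to the consecutive-success gaps $D_u^{(i)}$ being at most $h$, which is a geometric computation yielding $\mathbb{P}(\mathcal{F}_u)=(1-(1-p_{\text{obf}})^h)^{(l-1)}$ in \eqref{p_F_u}. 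At no point does any $X_u(k)$ enter: the pattern is synthesized purely from the inserted symbols, and the positions are governed solely by the $W_u(k)$ and by the uniform \textit{superstring} draw.

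I would then note that the only way $\mathcal{R}$ appears in these expressions is through its cardinality $r$, which fixes both the number of length-$l$ strings $r^l$ (hence the factor $1/r^l$ and the summation range) and the construction of the \textit{superstring} itself; since $\mathcal{R}$ is assumed known, $r$ is a known constant. The argument for Theorem~\ref{second-thm} is identical line for line, the only change being the indexing of the uniform distribution of $L_{u,1}$ and the corresponding $\delta_\alpha'$, neither of which references the data. The main and essentially only subtlety to guard against is the temptation to claim that the pattern could also be completed using unobfuscated data points $X_u(k)$; this is indeed true in reality and would make $\mathbb{P}(\mathcal{B}_u)$ data-dependent, but the lower bounds deliberately discard that contribution by restricting attention to the sufficient event, so the bounds themselves remain functions of $(p_{\text{obf}},h,l,m,r)$ alone. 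Collecting these observations establishes the claim.
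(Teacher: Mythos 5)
Your proof is correct and follows the same route as the paper, whose own proof is simply the one-line observation that the claim ``follows immediately'' from Theorems~\ref{thm:simple-ppm-privacy-guarantee} and~\ref{second-thm}; you have merely spelled out the inspection (that every quantity in the bounds depends only on $p_{\text{obf}}$, $h$, $l$, $m$, and $r=|\mathcal{R}|$) that the paper leaves implicit. Your added remark distinguishing the data-independence of the \emph{bounds} from the data-dependence of $\mathbb{P}(\mathcal{B}_u)$ itself is a correct and worthwhile clarification.
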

\begin{proof}
This follows immediately from Theorems~\ref{thm:simple-ppm-privacy-guarantee} and \ref{second-thm}.
\end{proof}

Theorems~\ref{thm:simple-ppm-privacy-guarantee} and \ref{second-thm} provide $\epsilon$-privacy for constant $\epsilon$ (i.e. $\epsilon$ not decreasing in the number of users $n$).  As noted in Section \ref{sec:framework}, this is a very strong version of privacy, and hence weaker forms are also of practical interest.  Thus, we consider cases where $\epsilon$ goes to zero, but in a way that each user is still confused with $N^{(n)}$ users, where $N^{(n)} \rightarrow \infty$ as $n \rightarrow \infty$. First, the following lemma readily establishes that there are infinitely many users with the same pattern as user $1$ in such cases.

\begin{lem} \label{thm3}
Let $N^{(n)}$ be the number of users with the same pattern as user $1$. For any $0<\beta <1$, if $\mathbb{P}\left(\mathcal{B}_u'\right) = \mathbb{P}(\text{user } u\text{ has pattern of user } 1)   \geq \frac{1}{n^{1-\beta}}$, then $N^{(n)}\to \infty$ with high probability as $n \to \infty$.  More specifically, as $n \to \infty$
\[ \mathbb{P}\left(N^{(n)} \geq \frac{n^\beta}{2} \right) \to 1. \]
\end{lem}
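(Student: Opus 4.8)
The plan is to recognize $N^{(n)}$ as a sum of independent indicator variables and to control its lower tail with a Chernoff bound. First I would write
\[ N^{(n)} = \sum_{u=2}^{n} \mathbb{1}_{\mathcal{B}_u'}, \]
where $\mathbb{1}_{\mathcal{B}_u'}$ is the indicator that user $u$'s obfuscated sequence contains user $1$'s pattern. Since each user's \textit{superstring} and each user's obfuscation bits $W_u(k)$ are generated independently across $u$, and user $1$'s target pattern is fixed, the indicators $\mathbb{1}_{\mathcal{B}_u'}$ (for $u \geq 2$) are mutually independent Bernoulli variables. By the data-independent lower bound of Theorem~\ref{second-thm} (and the accompanying data-independence lemma), each has success probability $p_n = \mathbb{P}(\mathcal{B}_u') \geq n^{-(1-\beta)}$ uniformly in $u$.

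Next I would compute the mean. By linearity, $\mu_n := \mathbb{E}\!\left[N^{(n)}\right] = (n-1)\,p_n \geq (n-1)\,n^{-(1-\beta)} = n^{\beta}\left(1 - \tfrac{1}{n}\right)$. Because $\beta > 0$, this gives $\mu_n \to \infty$, and for all large $n$ the target threshold $n^{\beta}/2$ lies strictly below the mean, which is exactly the regime where a lower-tail concentration bound is effective.

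The heart of the argument is the multiplicative Chernoff lower-tail inequality: for a sum of independent Bernoulli variables with mean $\mu$ and any $0 < \delta < 1$, $\mathbb{P}\!\left(N \leq (1-\delta)\mu\right) \leq \exp\!\left(-\delta^2 \mu / 2\right)$. I would choose $\delta_n$ so that $(1-\delta_n)\mu_n = n^{\beta}/2$, i.e. $\delta_n = 1 - \frac{n^{\beta}}{2\mu_n}$. Using $\mu_n \geq n^{\beta}(1 - 1/n)$ gives $\delta_n \geq 1 - \frac{1}{2(1 - 1/n)} \to \tfrac{1}{2}$, so $\delta_n$ is bounded below by a positive constant for large $n$. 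Substituting yields
\[ \mathbb{P}\!\left(N^{(n)} < \tfrac{n^{\beta}}{2}\right) \leq \exp\!\left(-\tfrac{1}{2}\,\delta_n^2\,\mu_n\right) \longrightarrow 0, \]
since $\delta_n$ stays away from $0$ while $\mu_n \to \infty$. Taking complements gives $\mathbb{P}\!\left(N^{(n)} \geq n^{\beta}/2\right) \to 1$, and since $n^{\beta}/2 \to \infty$, this also yields $N^{(n)} \to \infty$ with high probability.

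I expect the main point requiring care to be the \emph{independence} of the events $\mathcal{B}_u'$ across users, which rests on the per-user independent generation of the \textit{superstring} and the obfuscation bits; once this is clean, the concentration step is routine. A secondary subtlety is keeping $\delta_n$ bounded away from zero (handled by the explicit choice above) and noting that whether or not user $1$ itself is counted in $N^{(n)}$ affects the mean only by $1$ and hence leaves the asymptotics unchanged. Throughout, I would invoke only the hypothesis $p_n \geq n^{-(1-\beta)}$ rather than the full expression of Theorem~\ref{second-thm}, so the proof stays short.
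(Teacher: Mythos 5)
Your proposal is correct and follows essentially the same route as the paper: decompose $N^{(n)}$ into a sum of independent per-user indicators, lower-bound the mean by $n^{\beta}$ via the hypothesis $\mathbb{P}(\mathcal{B}_u') \geq n^{-(1-\beta)}$, and apply the multiplicative Chernoff lower-tail bound (the paper simply fixes $\delta = 1/2$ where you solve for $\delta_n$, and it sums over all $n$ users rather than excluding user $1$ --- immaterial differences). As a minor bonus, your use of linearity of expectation is written correctly, whereas the paper's displayed equation contains a typographical product symbol where a sum is intended.
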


\begin{proof}
We define the binary random variable $C_u$ to denote whether user $u$'s obfuscation sequence $\textbf{Z}_u$, for $u=1,2,\ldots,n$, contains user $1$'s identifying pattern. $C_u = 1$ indicates that $Z_u$ contains user $1$'s identifying pattern, and $C_u = 0$ otherwise.

$N^{(n)}$ is the total number of users who have the same pattern as user $1$'s identifying pattern; thus $N^{(n)}= \sum_{u=1}^nC_u$.  Recall that $\mathbb{P}\left(\mathcal{B}_u'\right) = \mathbb{P}(\text{user } u\text{ has pattern of user } 1)   \geq \frac{1}{n^{1-\beta}}$; hence,
\begin{align}
\mathbb{E}\left[N^{(n)}\right] =\Pi_{u=1}^{n} \mathbb{E}\left[C_u\right]\geq n \frac{1}{n^{1-\beta}} = n^\beta.\ \
\label{eqN1}
\end{align}
On the other hand, by employing the Chernoff bound, we have
\begin{align}
\mathbb{P}\left(N^{(n)} \leq (1-\delta) \mathbb{E}\left[N^{(n)}\right] \right) &\leq \exp\left(-\frac{\delta{}^2}{2}\mathbb{E}\left[N^{(n)}\right]\right).\ \
\label{eqN2}
\end{align}
Now if we assume $\delta =0.5$, by (\ref{eqN1}) and (\ref{eqN2}), we can conclude
\begin{align}
\mathbb{P}\left(N^{(n)} \leq \frac{n^\beta}{2} \right) &\leq \mathbb{P}\left(N^{(n)} \leq \frac{ \mathbb{E}\left[N^{(n)}\right]}{2} \right)\\
& \leq \exp\left(-\frac{\mathbb{E}\left[N^{(n)}\right]}{8}\right)\\
& \leq \exp\left(-\frac{n^\beta}{8}\right)\to 0,
\label{eqN3}
\end{align}
as $n \to \infty$. As a result, as $n$ becomes large, $\mathbb{P}\left(N^{(n)} \geq \frac{n^\beta}{2} \right) \to 1.$
In other words, the total number of users who have the same pattern as user $1$'s identifying pattern goes to infinity.
\end{proof}

The following theorem shows that by using the proposed SL-SBU technique, we can indeed achieve a privacy guarantee against pattern matching attacks while employing a small obfuscation probability, in fact with $p_{\text{obf}} \rightarrow 0$ as $n \rightarrow \infty$. As motivation, note that in many practical scenarios the size of the alphabet $r$ could be very large and indeed can scale with $n$, the number of users. For example, consider a scenario where the data shows the location of users in an area of interest such as a town or a neighborhood within a city. Assuming a certain level of granularity in the location data, the number of possible locations $r$ and number of users $n$ become larger as the considered area becomes larger.  In such scenarios, it makes sense to write $r=r(n)$ to explicitly denote that $r$ can change as a function of $n$. 


\begin{thm}
For the SL-SBU method, let $l>1$ and $h \geq 1$ be fixed. Choose $0<\beta <1$, and define $d(n) = {m(n)}{n^{-\frac{1-\beta}{l-1}}}$. If $[d(n)n^{\theta}]^{\frac{1}{l}} \leq r(n) \leq [d(n)n^{\theta l}]^{\frac{1}{l}}$ for some $0<\theta < \frac{1-\beta}{l-1}$, then by choosing $p_{\text{obf}} = b_n = {n^{-\frac{1-\beta}{l-1}+\theta}}$, and $ \liminf_{n \rightarrow \infty} mb_n \geq 9$, we have
\[ \mathbb{P}\left(\mathcal{B}_u'\right) \geq \frac{c}{n^{1-\beta}},\]
for some constant $c=c(h,l)$.  
\label{thm4}
\end{thm}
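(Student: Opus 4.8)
The plan is to start from the bound of Theorem~\ref{second-thm} and track the asymptotics of each of its three factors under the prescribed scaling $p_{\text{obf}} = b_n = n^{-\frac{1-\beta}{l-1}+\theta}$. Since $0<\theta<\frac{1-\beta}{l-1}$ forces $b_n \to 0$ (and hence $m\to\infty$, because $\liminf m b_n \geq 9$), I would first replace the prefactor $\left(1-(1-p_{\text{obf}})^h\right)^{l-1}$ by a clean lower bound of the form $\left(\tfrac{h}{2}\,p_{\text{obf}}\right)^{l-1}$, valid for all large $n$ since $1-(1-p)^h \sim hp$ as $p\to 0$. Writing $G = m-h(l-1)$, note that $G\geq m/2$ eventually, so that the prefactor times $1/r^l$ is essentially $\left(\tfrac{h}{2}\right)^{l-1}p_{\text{obf}}^{\,l-1}/r^l$.

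The second step is to handle the sum $S = \sum_{\alpha=0}^{\min\{r^l-1,\lfloor Gp_{\text{obf}}\rfloor\}}\bigl[1-\exp(-\tfrac12\delta_\alpha'^2 Gp_{\text{obf}})\bigr]$. Here the lower constraint $r(n)\geq [d(n)n^\theta]^{1/l}$ enters: raising it to the $l$-th power gives $r^l \geq d(n)n^\theta = m b_n > G p_{\text{obf}}$, so the minimum equals $\lfloor Gp_{\text{obf}}\rfloor$ and the sum runs over its full range. I would then lower bound $S$ by discarding every term with $\alpha > \tfrac13 Gp_{\text{obf}}$; for the retained terms $\delta_\alpha' = 1-\tfrac{\alpha}{Gp_{\text{obf}}}\geq \tfrac23$, so $\tfrac12\delta_\alpha'^2 Gp_{\text{obf}} \geq \tfrac29 Gp_{\text{obf}} \geq 2$ once $Gp_{\text{obf}}\geq 9$ — which is exactly what $\liminf m b_n \geq 9$ (together with $G\sim m$) guarantees for large $n$. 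Each retained term is therefore at least $1-e^{-2}$, and since there are at least $\tfrac13 Gp_{\text{obf}}$ of them, $S \geq \tfrac{1-e^{-2}}{3}\,Gp_{\text{obf}}$.

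Combining the two steps yields $\mathbb{P}(\mathcal{B}_u') \geq c_1\left(\tfrac{h}{2}\right)^{l-1}\frac{Gp_{\text{obf}}^{\,l}}{r^l}$ with $c_1 = \tfrac{1-e^{-2}}{3}$; observe that the sum has converted the $p_{\text{obf}}^{\,l-1}$ of the prefactor into $p_{\text{obf}}^{\,l}$ times $G$. The final step is the algebraic substitution: using $G\geq m/2$, the identity $p_{\text{obf}}^{\,l} = n^{-\frac{l(1-\beta)}{l-1}+l\theta}$, and the upper constraint $r^l \leq d(n)n^{\theta l} = m\,n^{-\frac{1-\beta}{l-1}+\theta l}$, the factor $m$ cancels and the exponents combine as $-\tfrac{l(1-\beta)}{l-1}+l\theta + \tfrac{1-\beta}{l-1}-\theta l = -(1-\beta)$, so that $\frac{Gp_{\text{obf}}^{\,l}}{r^l}\geq \tfrac12\, n^{-(1-\beta)}$. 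This delivers the claim with $c=\tfrac12 c_1\left(\tfrac{h}{2}\right)^{l-1}$, a constant depending only on $h$ and $l$.

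The main obstacle is not any single estimate but the bookkeeping that makes the two constraints on $r(n)$ and the choice of $b_n$ fit together: the lower constraint is needed only to keep the summation range full, so that $S$ grows linearly like $Gp_{\text{obf}}$, while the upper constraint controls the denominator $r^l$, and the exponent $\theta$ is engineered precisely so that the $n^{l\theta}$ produced by $p_{\text{obf}}^{\,l}$ cancels the $n^{\theta l}$ in the bound on $r^l$. Verifying that this cancellation is exact across the whole admissible interval $0<\theta<\frac{1-\beta}{l-1}$, leaving the residual exponent equal to the required $-(1-\beta)$, is the crux, as it shows the privacy guarantee is insensitive to the free parameter $\theta$.
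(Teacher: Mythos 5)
Your proposal is correct and follows essentially the same route as the paper's proof: the same prefactor bound $1-(1-p_{\text{obf}})^h \geq \tfrac{h}{2}p_{\text{obf}}$, the same use of the lower constraint on $r(n)$ to keep the summation range of order $Gp_{\text{obf}}$, a split of the $\alpha$-range at a constant fraction of $Gp_{\text{obf}}$ to show the sum is $\Theta(Gp_{\text{obf}})$, and the identical final exponent cancellation (the paper bounds the sum as count-minus-exponentials rather than discarding the tail terms, but this is only a cosmetic difference in the constant). The one nitpick is that $\liminf_n mb_n \geq 9$ only gives $Gb_n \geq 9-\varepsilon$ eventually, not $Gb_n \geq 9$ exactly, so you should work with, say, $Gb_n \geq 8$ for large $n$ (as the paper does); this only changes your constant $1-e^{-2}$ to $1-e^{-16/9}$.
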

\begin{proof}
First note that the assumptions $b_n = {n^{-\frac{1-\beta}{l-1}+\theta}}$, and $ \liminf_{n \rightarrow \infty} mb_n \geq 9$ imply that $m(n) \rightarrow \infty$ as $n \rightarrow \infty$. Recall from Theorem~$2$ that: 
\begin{align}
 & \mathbb{P}\left(\mathcal{B}_u'\right) \geq  \frac{\left( 1-\left( 1-p_{\text{obf}}\right)^h \right)^{(l-1)}}{r^l} \sum_{\alpha=0}^{\min \left\{(r^l-1), \big{\lfloor} Gp_{\text{obf}}\big{\rfloor} \right\} }\hspace{-30 pt} 1-\exp \left(-\frac{\delta_\alpha'^2}{2} Gp_{\text{obf}} \right),\ \
 \label{eq:second-term}
\end{align}
where
 \[G=m-h(l-1), \ \ \delta_\alpha'=1-\frac{\alpha}{Gp_{\text{obf}}}, \text{ for } \alpha=0,1,\cdots, r^l-1. \]

Note that for any $\tau\in{}\mathbb{R}, 1-\tau \leq e^{-\tau}$; thus, with $\tau=b_n$,
\begin{align}
 (1-b_n)^h \leq  e^{-b_nh},\quad \text{for } b_n\in{}\mathbb{R}. \label{theo3_3}
\end{align}
In addition, for any $0 \leq \upsilon \leq 1$,  $1 - e^{-\upsilon} \geq \frac{\upsilon}{2}$; thus, with $\upsilon =b_nh$,
\begin{align}
1 - e^{-b_nh} \geq \frac{b_nh}{2}, \quad \text{for } 0\leq b_nh \leq 1. \label{theo3_4}
\end{align}
Now, by (\ref{theo3_3}) and (\ref{theo3_4}), we can conclude:
\begin{align*}
1-(1-b_n)^h &\geq{} 1- e^{-b_nh} \geq \frac{b_nh}{2}, \text{ for } 0\leq b_nh\leq 1.
\end{align*}
Note that $h$ is a constant, and $b_n \rightarrow 0$ as $n \to \infty$.  As a result: 
\begin{align}
\frac{\left(1-(1-b_n)^h\right)^{l-1}}{r^l} \geq{} \left(\frac{b_nh}{2}\right)^{l-1}\cdot{}\frac{1}{r^l} = \left(\frac{h}{2}\right)^{l-1} \frac{b_n^{l-1}}{r^l}.\ \
\label{eqNaz}
\end{align}
From the statement of the theorem, $[d(n)n^{\theta}]^{\frac{1}{l}} \leq r(n) \leq [d(n)n^{\theta l}]^{\frac{1}{l}}$ for some $0<\theta < \frac{1-\beta}{l-1}$; thus,
\begin{align*}
r^l \geq d(n) n^{\theta} = {m}{n^{-\frac{1-\beta}{l-1}+\theta}} = mb_n,
\end{align*}
as a result,
\begin{align}
r^l-1 \geq  mb_n-1.\ \
\end{align}
Since $G = m - h(l-1) \leq m$, we have:
\begin{align}
r^l-1 \geq  Gb_n-1.\ \
\label{eqNaz2}
\end{align}
Also, 
\begin{align}
Gb_n-1 \leq \big{\lfloor} Gp_{\text{obf}}\big{\rfloor} = \lfloor Gb_n \rfloor.\ \
\label{eqNaz3}
\end{align}
Thus, by (\ref{eqNaz2}) and (\ref{eqNaz3}),
\begin{align*}
Gb_n-1 \leq \min  \left\{(r^l-1), \big{\lfloor} Gp_{\text{obf}}\big{\rfloor} \right\} \leq \lfloor Gb_n \rfloor.
\end{align*}
The above equation can be used to obtain a lower bound for the second term on the right side of (\ref{eq:second-term}):
\begin{align}
& \sum_{\alpha=0}^{\min\left\{(r^l-1), \big{\lfloor}{}Gp_{\text{obf}} \big{\rfloor}
\right\}} \left\{ 1 - \exp\left(-\frac{1}{2}\delta_\alpha'^2Gp_{\text{obf}}\right) \right\} \nonumber \\
& =  \min  \left\{(r^l-1), \big{\lfloor} Gp_{\text{obf}}\big{\rfloor} \right\} + 1 \nonumber \\
& - \sum_{\alpha=0}^{\min\left\{r^l-1, \big{\lfloor} Gp_{\text{obf}}\big{\rfloor}
\right\}}  \exp\left(-\frac{1}{2}\delta_\alpha'^2Gp_{\text{obf}}\right) \nonumber \\
& \geq Gb_n - \sum_{\alpha=0}^{ \lfloor Gb_n \rfloor} \exp\left(-\frac{1}{2}\delta_\alpha'^2Gb_n\right). \label{theo4_deri}
\end{align}

Now since $G = m - h(l-1)$, $h$ and $l$ are constants, and $b_n \rightarrow 0$ as $n \to \infty$, 
\begin{align*}
\liminf_{n \rightarrow \infty} Gb_n &= \liminf_{n \rightarrow \infty} (mb_n-h(l-1)b_n)\\
& = \liminf_{n \rightarrow \infty} mb_n \geq 9.
\end{align*}
Thus, for large enough $n$, $Gb_n \geq 8$.
Note that for $\alpha = 0, \ldots, \lfloor\frac{Gb_n}{2}\rfloor$: 
\begin{align*}
\delta{}'_\alpha = 1 - \frac{\alpha{}}{Gb_n} \geq 1 - \frac{1}{2} = \frac{1}{2};
\end{align*}
thus, for $\alpha = 0, \ldots, \lfloor\frac{Gb_n}{2}\rfloor$,
\begin{align}
\exp\left(-\frac{1}{2}\delta_\alpha'^2Gb_n\right) \leq \exp\left(-\frac{Gb_n}{8}\right) \leq  \exp\left(-1\right).\ \
\label{eqNaz5}
\end{align}

On the other hand,for $\alpha = \lfloor\frac{Gb_n}{2}\rfloor+1, \ldots, \lfloor{Gb_n}\rfloor$: 
\begin{align*}
\delta{}'_\alpha = 1 - \frac{\alpha{}}{Gb_n} \geq 0,
\end{align*}
and as a result, for $\alpha = \lfloor\frac{Gb_n}{2}\rfloor+1, \ldots, \lfloor{Gb_n}\rfloor$, 
\begin{align}
\exp\left(-\frac{1}{2}\delta_\alpha'^2Gb_n\right) \leq 1.\ \
\label{eqNaz6}
\end{align}
%
Now by (\ref{eqNaz5}) and (\ref{eqNaz6}), we conclude:
\begin{align}
& \sum_{\alpha=0}^{ \lfloor Gb_n \rfloor } \exp\left(-\frac{1}{2}\delta_\alpha'^2Gb_n\right)  \leq \left(\frac{Gb_n}{2}+1\right) \exp\left(-1\right)  + \left(\frac{Gb_n}{2}+1\right) \times 1 \\
& = \frac{Gb_n}{2} \left(1+ \exp\left(-1\right) + \frac{2 \left(1+ \exp\left(-1\right) \right) }{Gb_n} \right)\\
&\leq \frac{Gb_n}{2} \left(1+ \exp\left(-1\right) + \frac{2 \left(1+ \exp\left(-1\right) \right) }{8} \right)\\
&\leq 0.86 Gb_n.\ \
\label{eqNaz7}
\end{align}
As a result, by (\ref{theo4_deri}) and (\ref{eqNaz7}), for large enough $n$, 
\begin{align*}
& \sum_{\alpha=0}^{\min\left\{(r^l-1), \big{\lfloor}{}Gp_{\text{obf}} \big{\rfloor}
\right\}} \left\{ 1 - \exp\left(-\frac{1}{2}\delta_\alpha'^2Gp_{\text{obf}}\right) \right\}  \\
&\geq .14 Gb_n \geq 0.1  Gb_n.
\end{align*}
Since $G = m - h(l-1) \leq m$, where $h$ and $l$ are constants, and $m \rightarrow \infty$, we conclude for large enough $n$, $G(n) \geq \frac{m(n)}{2}$,
\begin{align}
& \sum_{\alpha=0}^{\min\left\{(r^l-1), \big{\lfloor}{}Gp_{\text{obf}} \big{\rfloor}
\right\}} \left\{ 1 - \exp\left(-\frac{1}{2}\delta_\alpha'^2Gp_{\text{obf}}\right) \right\} \geq 0.05{mb_n}  .\ \
\label{eq23}
\end{align}
Now, by (\ref{eq:second-term}), (\ref{eqNaz}), and (\ref{eq23}), we conclude that for some constant $c=c(h,l)$, 
\begin{align*}
\mathbb{P}\left(\mathcal{B}_u'\right) \geq \left (\frac{h}{2} \right)^{l-1} \frac{b_n^{l-1}}{r(n)^l}\frac{Gb_n}{10} \geq c \frac{mb_n^l}{r(n)^l}.
\end{align*}
Since $r(n)^l \leq d(n)n^{\theta{}l} = {m}{n^{-\frac{1-\beta}{l-1}}} \times n^{\theta{}l}$, and $b_n ={n^{-\frac{1-\beta}{l-1}+\theta}}$:  
\begin{align*}
\mathbb{P}\left(\mathcal{B}_u'\right) \geq c mb_n^l \times \frac{n^{\frac{1-\beta}{l-1}}}{m\times n^{\theta{}l}} = c\frac{n^{\frac{1-\beta}{l-1}}}{n^{\frac{1-\beta}{l-1}l-\theta{}l}n^{\theta{}l}} = \frac{c}{n^{\frac{1-\beta}{l-1}(l-1)}} = \frac{c}{n^{1-\beta}}.
\end{align*}
\end{proof}

The SL-SBU and i.i.d. obfuscation schemes will be compared extensively via simulation in Section \ref{sec:numerical}.  Here we provide an analytical result to both predict the results of that comparison and provide insight into such. 

\begin{thm}
Suppose the sequence is $[X_1, X_2,\ldots]$ and the pattern is $Q=[q_1, q_2, \ldots, q_l]$.
We say the pattern occurs\footnote{Here ``the pattern occurs'' is the specific case of Definition~$1$ when $h=1$.} in the sequence at time index $t$ if $X_t = q_1$, $X_{t+1} = q_{2}$, $\ldots$ , $X_{t+l-1} = q_l$. We use $T_{\text{SL-SBU}}$ to denote the time index where the pattern first occurs in the SL-SBU obfuscation sequence. Similarly, we use $T_{\text{i.i.d.}}$ to denote the time index where the pattern first occurs in the i.i.d. obfuscation sequence.  Then, the expectation of these times are given by:
\[\mathbb{E}[T_{\text{SL-SBU}}] = \frac{r^l+1}{2},\] 
\[\mathbb{E}[T_{\text{i.i.d.}}] \geq r^l.\] \label{Theo5}
\end{thm}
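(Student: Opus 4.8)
The plan is to read off both expectations from the distribution of the first-occurrence position of $Q$ in the relevant obfuscation value sequence, namely the superstring $\boldsymbol{a}_u$ for SL-SBU and the uniform i.i.d. sequence $\boldsymbol{b}_u$ for the i.i.d. scheme. These are exactly the sequences into which patterns are ``planted,'' so the two claims amount to comparing how quickly a fixed target pattern surfaces in a De Bruijn-based superstring versus in a purely random string.

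For the SL-SBU term I would reuse structural facts already in hand. By Lemma~\ref{lem1}, each SL-SBU superstring is a De Bruijn sequence $B(r,l)$ with its first $l-1$ symbols appended, and the construction selects one uniformly among the $r^l$ circular shifts of $B(r,l)$. Since $B(r,l)$ contains every length-$l$ string exactly once, the fixed pattern $Q$ starts at exactly one of the $r^l$ windows, and a uniform circular shift spreads that starting index bijectively over $\{1,2,\ldots,r^l\}$, i.e.\ uniformly. This is precisely the distribution of $L_{u,1}$ established in the proof of Theorem~\ref{second-thm}. Identifying $T_{\text{SL-SBU}}$ with this starting index, the computation reduces to the mean of a discrete uniform variable:
\[ \mathbb{E}[T_{\text{SL-SBU}}] = \frac{1}{r^l}\sum_{t=1}^{r^l} t = \frac{r^l+1}{2}. \]

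For the i.i.d. term I would treat the first appearance of $Q$ in a uniform i.i.d.\ string as a classical pattern waiting-time problem and attack it with a fair-game martingale (the ``gambling team'' method). The plan is: at each time step a fresh gambler enters, stakes a unit, and bets sequentially through the pattern, rolling the entire fortune onto the next symbol after each win at fair odds $r\!:\!1$ (each symbol matches with probability $r^{-1}$). Each individual bet is fair, so the aggregate net gain is a martingale; optional stopping at the first-occurrence time $T$ equates the expected total amount staked, which is $\mathbb{E}[T]$, with the expected fortune still held by the surviving gamblers at time $T$. When the pattern first completes, a gambler who has survived $i$ consecutive wins holds $r^i$, and such a gambler survives precisely when the length-$i$ suffix of $Q$ coincides with its length-$i$ prefix; writing $\delta_i$ for this prefix--suffix overlap indicator yields $\mathbb{E}[T_{\text{i.i.d.}}] = \sum_{i=1}^{l}\delta_i r^i$. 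Because the whole word is trivially its own length-$l$ prefix and suffix, $\delta_l=1$, and all remaining terms are nonnegative, whence $\mathbb{E}[T_{\text{i.i.d.}}] \geq r^l$.

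The main obstacle is the i.i.d.\ analysis, and specifically the self-overlap bookkeeping: one must verify that the superimposed bets form a genuine martingale, characterize exactly which gamblers remain solvent at the stopping time (the prefix-equals-suffix condition encoded by $\delta_i$), and justify optional stopping by confirming that $T$ has finite mean so that expectation and limit may be exchanged. The SL-SBU side, by contrast, is essentially immediate once the uniform-shift distribution from the proof of Theorem~\ref{second-thm} is invoked; the substantive content lies entirely in showing that self-overlaps can only \emph{increase} the i.i.d.\ waiting time, so that the always-present full-length overlap term $\delta_l r^l$ alone already forces the $r^l$ lower bound and thereby certifies that SL-SBU surfaces patterns sooner.
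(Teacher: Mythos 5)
Your proof is correct, and the SL-SBU half is essentially identical to the paper's: both reduce $\mathbb{E}[T_{\text{SL-SBU}}]$ to the mean of a uniform random variable on $\{1,\ldots,r^l\}$ via the uniform-circular-shift distribution of the pattern's starting index established in the proof of Theorem~\ref{second-thm}. The i.i.d.\ half, however, takes a genuinely different route. The paper models successive occurrences of $Q$ as a delayed renewal process, invokes Blackwell's theorem to identify the mean inter-arrival time $\mu=r^l$, and then splits on the largest proper prefix--suffix overlap $l_Q$: if $l_Q=0$ the first arrival time is exactly $\mu$, and if $l_Q>0$ it writes $\mathbb{E}[T_{\text{i.i.d.}}]=\mathbb{E}[T_{\text{i.i.d.}}(Q')]+\mu\geq\mu$ for the overlap word $Q'$. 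Your gambling-team martingale instead delivers the full closed form $\sum_{i=1}^{l}\delta_i r^i$ over all borders of $Q$ and extracts the bound from the always-present term $\delta_l r^l$. The martingale route buys an exact expression and avoids the renewal-theoretic machinery (at the cost of the optional-stopping and solvency bookkeeping you correctly flag as the main burden); the paper's route is lighter on computation but leans on Blackwell's theorem and on the decomposition identity for overlapping patterns. Notably, the martingale references appear in a commented-out earlier draft of this very proof, so the authors considered your approach equivalent. One shared caveat, which afflicts the paper equally and is immaterial to the intended factor-of-two comparison: both arguments actually compute the expected \emph{completion} time of the pattern, whereas the theorem defines $T$ as the \emph{starting} index, so for a border-free pattern the start index has mean $r^l-(l-1)$ rather than $r^l$; since $l$ is a constant and the SL-SBU mean is $(r^l+1)/2$, the conclusion is unaffected.
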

\begin{proof}
For the SL-SBU obfuscation sequence with length $r^l + l -1$, we can immediately establish the result by the property of the De Bruijn-based sequences of Theorem~$2$:
\begin{align}
 \no \mathbb{E}[T_{\text{SL-SBU}}] &= \frac{1}{r^l} \left[ 1+2+\cdots + r^l  \right]\\
 \no &= \frac{r^l+1}{2}.
\end{align}
Now, we consider the i.i.d. obfuscation sequence, and the corresponding $\mathbb{E}[T_{\text{i.i.d.}}]$. We say that the pattern $Q$ has an overlap of length $l'<l$ if
\[q_1=q_{l-l'+1}, q_2=q_{l-l'+2}, q_{l'}=q_{l}.\]

The largest such $l'$ is called the overlap value of the sequence $Q$ and shown by $l_Q$; thus, for every pattern of length $l$, we have $0 \leq l_Q \leq l-1$. Now the arrival times of pattern $Q$ in the i.i.d. sequence $[X_1, X_2,\ldots]$ can be modeled as a delayed renewal process with an average inter-arrival time $\mu$.  By Blackwell's theorem for delayed renewal processes~\cite{stochastic_processes}, we have
\[ \lim_{t \rightarrow \infty} P(\textrm{Renewal at } t) =\frac{1}{\mu}. \]
Since the sequence is i.i.d., we also have
\[ \lim_{t \rightarrow \infty} P(\textrm{Renewal at } t) = \frac{1}{r^l}.\]
We conclude $\mu=r^l$. Now, consider two cases: $l_Q=0$ and $l_Q>0$. If $l_Q=0$, then 
\[ \mathbb{E}[T_{\text{i.i.d.}}]=\mu=r^l. \]
On the other hand, if $l_Q>0$, let $Q'=[q_1, q_2, \ldots, q_{l_Q}]$. In this case, let also $T_{\text{i.i.d.}}(Q')$ denote the time index where the pattern $Q'$ first occurs in the i.i.d. obfuscation sequence. We have
\[ \mathbb{E}[T_{\text{i.i.d.}}]=\mathbb{E}[T_{\text{i.i.d.}}(Q')]+\mu \geq \mu =r^l. \]
\end{proof}

Finally, we note that the i.i.d. obfuscation approach of \cite{takbiri2018matching} can be readily combined with the techniques proposed here to provide robust privacy simultaneously against both statistical matching and pattern matching attacks.  
%
\section{Combination of i.i.d. Obfuscation and SL-SBU Obfuscation}
To this point, we have employed the SL-SBU obfuscation method to protect users' data against pattern matching attacks while the adversary makes no assumptions about the statistical model of users' data sequences. However, this method has a drawback: as the number of possible values for each user's data points ($r(n)$) increases, it becomes exponentially less likely that an identifying pattern of a user is observed within other users' data; as a result, a pattern matching attack would become a serious threat to users' privacy. 

On the other hand, Takbiri et al.~\cite{takbiri2018matching} considered a strong assumption regarding the statistical model of users' data and introduced a simple i.i.d. obfuscation method in which the samples of the data of each user are reported with error with a certain probability, where that probability itself is generated randomly for each user. In other words, the obfuscated data is obtained by passing the users' data through an $r-$ary symmetric channel with a random error probability. Takbiri et al.~\cite{takbiri2018matching} demonstrated that if the amount of noise level is greater than a critical value, users have perfect privacy against all of the adversary's possible attacks. The definition of perfect privacy is adopted from~\cite{tifs2016}:
 		\begin{define}
		User $u$ has \emph{perfect privacy} at time $k$ if and only if
		\begin{align}
		\no \lim\limits_{n\to \infty} \mathbb{I} \left(X_u(k);{\mathbf{Y}}\right) =0,
		\end{align}
		where $\mathbb{I}\left(X_u(k);{\textbf{Y}}\right)$ denotes the mutual information between the data point of user $u$ at time $k$ and the collection of the adversary's observations for all of the users.
	\end{define}

Here, we will combine these two methods of obfuscation in order to benefit from the advantages of both methods and achieve perfect privacy. Note that combining these two techniques does not have any cost asymptotically.
 	
\begin{figure}[h]
	\centering
	\includegraphics[width = 1.0\linewidth]{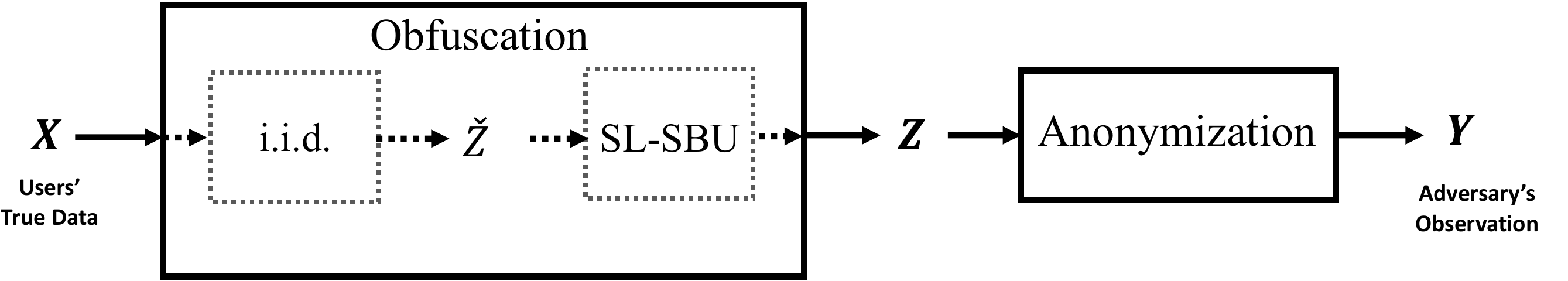}
	\caption{Applying two stages of obfuscation and then anonymization to the users' data points.}
	\label{fig:xyz2}
\end{figure}

As shown in Figure~\ref{fig:xyz2}, two stages of obfuscations and one stage of anonymization are employed to achieve perfect privacy. Note that the first stage is the same i.i.d. obfuscation technique given in ~\cite[Theorem 2]{takbiri2018matching}, and the second stage of obfuscation is the SL-SBU method introduced previously.   Thus, in Figure~\ref{fig:xyz2}, $\check{Z}_u(k)$ shows the (reported) data point of user $u$ at time $k$ after applying the first stage of obfuscation with the noise level equal to $a_n=\Omega\left({n^{-\frac{1}{r-1}}}\right)$, and $Z_u(k)$ shows the (reported) data point of user $u$ at time $k$ after applying the second stage of obfuscation with the noise level equal to $b_n=\Omega \left(n^{-\frac{1-\beta}{l-1}}\right)$.  Define the noise level of a two-stage obfuscation scheme with independent obfuscation probabilities $a_n$ and $b_n$ as $\psi_n=a_n+b_n-a_n b_n$.  We then have the following result.


\begin{thm}
\label{thmNaz} \textup{If $\textbf{Z}$ is the obfuscated version of $ \textbf{X}$ after two stages of obfuscation, and $\textbf{Y}$ is the anonymized
version of $\textbf{Z}$, and:
\begin{itemize}
\item The length of the time series data, $m=m(n)$, is arbitrary.
\item The noise level of the obfuscation method is $\psi_n = \Omega\left(\max \left\{n^{-\frac{1}{r-1}},n^{-\frac{1-\beta}{l-1}}\right\}\right)$.
\end{itemize}
Then, user $1$ has: 
\begin{itemize}
    \item\textit{privacy against pattern matching attacks} in any case;
    \item \textit{perfect privacy} if the assumptions about the statistical model of users' data is accurate.
\end{itemize}}
\end{thm}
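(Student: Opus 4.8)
The plan is to prove the two conclusions separately, exploiting the single observation that the two independent stages give an effective per-sample noise $\psi_n = a_n + b_n - a_n b_n$ with $\psi_n \geq a_n$ and $\psi_n \geq b_n$; hence each single-stage guarantee can only be helped, never hurt, by the presence of the other stage. The \emph{pattern matching} claim is carried by the second (SL-SBU) stage and the \emph{perfect privacy} claim by the first (i.i.d.) stage, and the work is to argue that neither stage interferes with the other's guarantee.

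For the pattern matching claim I would note that the second stage is, symbol for symbol, the SL-SBU mechanism of Section~\ref{sec:simple-ppm} run with $p_{\text{obf}} = b_n$, except that it acts on the intermediate sequence $\check{\mathbf{Z}}$ rather than on $\mathbf{X}$. The crucial point is that the events $\mathcal{E}_u'$ and $\mathcal{F}_u'$ behind Theorems~\ref{second-thm} and \ref{thm4} depend \emph{only} on the second-stage Bernoulli variables $W_u(k)$ and the uniformly chosen De~Bruijn--based superstring: whenever $W_u(k)=1$ the sample is overwritten by a superstring symbol irrespective of whether $\check{Z}_u(k)$ equals $X_u(k)$ or a first-stage error. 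Thus the first stage is transparent to the superstring-injection analysis, and the lower bound on $\mathbb{P}(\mathcal{B}_u')$ from Theorem~\ref{second-thm} holds verbatim with $p_{\text{obf}}=b_n$; since this bound is data-independent (the data-independence lemma), it provides $\epsilon$-privacy against pattern matching \emph{in any case}, with no assumption on the model. When $m$ grows fast enough that $\liminf_{n} m b_n \geq 9$, Theorem~\ref{thm4} sharpens this to $\mathbb{P}(\mathcal{B}_u') \geq c\, n^{-(1-\beta)}$, and Lemma~\ref{thm3} then yields that user~$1$'s pattern is shared by infinitely many users with high probability.

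For the perfect privacy claim I would reduce to \cite[Theorem~2]{takbiri2018matching} by a data-processing argument. Let $\check{\mathbf{Y}} = \text{Perm}(\check{\mathbf{Z}};\Pi)$ be the anonymized first-stage output; the stated first-stage noise $a_n = \Omega(n^{-1/(r-1)})$ gives $\lim_{n\to\infty}\mathbb{I}(X_u(k);\check{\mathbf{Y}}) = 0$. The actual observation is $\mathbf{Y} = \text{Perm}(\Phi(\check{\mathbf{Z}});\Pi)$, where $\Phi$ is the per-user SL-SBU map. Because $\Phi$ is applied with the identical, independent mechanism to every column and its randomness is independent of $\Pi$ and of $\mathbf{X}$, I would establish the commutation-in-law $\text{Perm}(\Phi(\check{\mathbf{Z}});\Pi) \stackrel{d}{=} \Phi(\text{Perm}(\check{\mathbf{Z}};\Pi))$, so that $\mathbf{Y}$ has the same joint distribution with $X_u(k)$ as $\Phi(\check{\mathbf{Y}})$. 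Then $X_u(k) \to \check{\mathbf{Y}} \to \mathbf{Y}$ is a Markov chain, and the data-processing inequality gives $\mathbb{I}(X_u(k);\mathbf{Y}) \leq \mathbb{I}(X_u(k);\check{\mathbf{Y}}) \to 0$, i.e. perfect privacy. I expect the commutation step to be the main obstacle: one must argue carefully, by conditioning on $\Pi$ and relabelling the independent second-stage randomness, that the exchangeable i.i.d.-across-users operation $\Phi$ produces identical joint laws whether applied before or after the anonymizing permutation. Once this is in hand, the rest is a direct appeal to the data-processing inequality and the two single-stage theorems, and the closing observation that the combination is asymptotically free follows because $\psi_n$ has the same order as $\max\{a_n,b_n\}$.
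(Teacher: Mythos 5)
Your proposal is correct and follows the paper's top-level decomposition exactly: the pattern-matching guarantee is delegated to the second (SL-SBU) stage via Theorem~\ref{thm4} and Lemma~\ref{thm3}, and the perfect-privacy guarantee is delegated to the first (i.i.d.) stage via \cite[Theorem~2]{takbiri2018matching}. Where you genuinely diverge is in how the non-interference of the two stages is justified. The paper's proof of the perfect-privacy half simply observes that the combined noise level satisfies $\psi_n = a_n + b_n - a_n b_n \geq a_n = \Omega\bigl(n^{-\frac{1}{r-1}}\bigr)$ and invokes the cited theorem with noise level $\psi_n$; this implicitly treats the two-stage channel as if it were the single $r$-ary symmetric channel for which that theorem was proved. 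Your route instead proves the first-stage guarantee for $\check{\mathbf{Y}}$ and then argues $\mathbb{I}(X_u(k);\mathbf{Y}) \leq \mathbb{I}(X_u(k);\check{\mathbf{Y}}) \to 0$ by the data-processing inequality, after checking that the per-user, i.i.d., $\Pi$-independent second-stage randomization commutes in law with the anonymizing permutation so that $X_u(k) \to \check{\mathbf{Y}} \to \mathbf{Y}$ is a Markov chain. This is more work but it is the more rigorous argument: it makes precise the ``extra independent noise cannot hurt'' step that the paper leaves implicit, and it does not require the combined channel to have any particular form. Likewise, your observation that the events $\mathcal{E}_u'$ and $\mathcal{F}_u'$ in Theorems~\ref{second-thm} and~\ref{thm4} depend only on the second-stage Bernoulli variables and the chosen superstring---so that the first stage is transparent to the superstring analysis---is a detail the paper's one-sentence citation of Theorem~\ref{thm4} omits but tacitly needs. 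In short: same skeleton, but your fleshing-out of both halves supplies justifications the paper asserts without proof, at the cost of the commutation lemma you correctly identify as the only nontrivial step.
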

\begin{proof}
First, we show that if the assumptions for the statistical model of users' data is accurate, users will have perfect privacy. We employ a noise level for the first stage of obfuscation equal to $a_n=\Omega\left({n^{-\frac{1}{r-1}}}\right)$, and the noise level for the second stage of obfuscation equal to $b_n=\Omega \left(n^{-\frac{1-\beta}{l-1}}\right)$. Using the definition for the noise level for two-stage obfuscation given before the theorem statement, the noise level of the combined obfuscation mechanism is:
\begin{align}
\psi_n=\Omega\left(\max \{a_n,b_n\}\right)=\Omega\left(\max \left\{n^{-\frac{1}{r-1}},n^{-\frac{1-\beta}{l-1}}\right\}\right),
\end{align}
as $n \to \infty$. From ~\cite[Theorem 2]{takbiri2018matching}: if $a_n$ is significantly larger than $\frac{1}{n^{r-1}}$,
then all users have perfect privacy independent of the value of m(n). Now, since $\psi_n \geq a_n$, by employing a noise value equal to $ \psi_n=\Omega\left(\max \left\{n^{-\frac{1}{r-1}},n^{-\frac{1-\beta}{l-1}}\right\}\right)$, all users achieve perfect privacy independent of the value of m(n). In other words, as $n \to \infty$, $\mathbb{I} \left(X_u(k);{\mathbf{Y}}\right) =0.$

If the assumption regarding the statistical model of users' data is accurate or not, Theorem~$3$ establishes that users would have privacy against pattern matching attack due to the second stage of our obfuscation method.
\end{proof}
\section{Data-Dependent Obfuscation}
\label{sec:data-dependent}

The obfuscation techniques proposed in Section \ref{sec:simple-ppm} are independent of the user data, as would be appropriate for real-time operation on non-buffered data as discussed in Section \ref{intro}.  However, as also discussed in Section \ref{intro}, there are scenarios such as image processing where the entire data sequence might be known to the PPM.  To exploit such, we employ opportunistic \textit{superstring} creation, which we refer to as \emph{data-dependent obfuscation (DDO)}\footnote{Note that the sequences developed might not technically be {\em superstrings}, as defined formally in Section \ref{sec:simple-ppm}, but, since the sequences are employed in a similar fashion to the {\em superstrings} of Section \ref{sec:simple-ppm}, we employ the same term to avoid confusion.}. The key point here is to choose obfuscated values $a_u(j)$ in an opportunistic fashion; that is, at each point, the element $a_u(j)$ in the \textit{superstring} is chosen based on the realized obfuscated sequence so far, with the goal of choosing $a_u(j)$ in a way to maximize the number of distinct patterns in the obfuscated sequence of user $u$. Figure \ref{fig:OSC} shows the structure of the DDO algorithm.


 \begin{figure}[h]
 	\centering
 	\includegraphics[width=\textwidth]{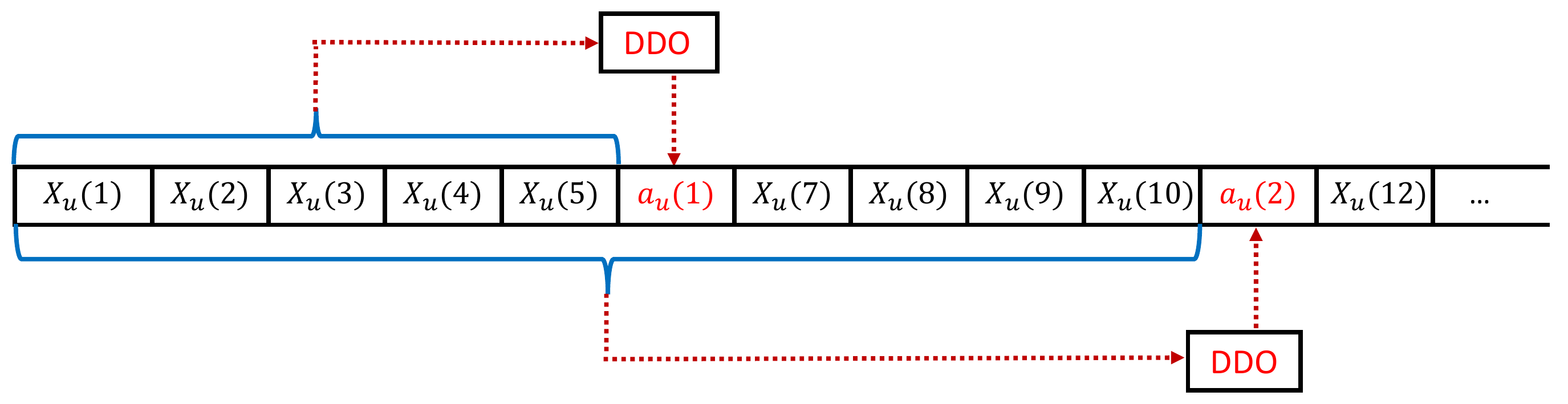}
 	\vspace{-7pt}
 	\caption{Data-dependent Obfuscation: $a_u(j)$ is chosen by the DDO algorithm based on the realized obfuscated sequence so far.}
 	\label{fig:OSC}
 	\vspace{-5pt}
 \end{figure}


\subsection{DDO Algorithms to Thwart Pattern Matching Attacks}

First we formulate solutions in a general setting for arbitrary pattern length $l$. Then we design three examples of DDO algorithms for specific values of $l$. Indeed, some practical pattern matching attacks are based on patterns with small length for identification or classification~\cite{christodoulakis2004pattern} \cite{cheng2013pattern}, and, as validated in our simulation results later, low-order DDOs can also work well for larger $l$ than they were designed. 

In what follows we drop the subscript $u$ to simplify notation.  Hence, let $X(k)$ be the data point at index $k$ for an arbitrary user and $Z(k)$ be its obfuscated version.


\begin{define}
For a \textit{pattern} $Q$ of length $l$, $N^l_k(Q)$ is the total number of times the \textit{pattern} $Q$ has been observed as a \textit{pattern} (Definition \ref{def:pattern}) up to and including time $k$. That is, it is the number of times the \textit{pattern} has been observed in
\[Z(1),Z(2), \cdots, Z(k).\]
A pattern distribution $N^l_k$ for the sequence $Z(1),Z(2), \cdots, Z(k)$ is the collection of values $N^l_k(Q)$ across all \textit{pattern}s $Q$ of length $l$.
\end{define}

In the special case of $l=1$, $N^1_k(i)$ is the number of times the value $i$ has appeared in $Z(1),Z(2), \cdots, Z(k)$.

\begin{define}
A DDO algorithm of order $l$ is a mapping from the set of all pattern distributions to the set of probability distributions over data point set $\mathcal{R}=\{0,1, \ldots, r-1\}$. This probability distribution, denoted by $P_{\text{DDO}}$, provides the probability of obfuscating $X(k+1)$ to the values $0,1, \ldots, r-1$, given that we are performing obfuscation on a given data sample.  
\end{define}

The simplest DDO algorithm, which we call \textit{Least-Observed Value (LOV)}, works as follows: to obfuscate $X(k+1)$, choose one of the values not present in $Z(1),Z(2), \cdots, Z(k)$, uniformly at random. If all values have been observed, we obfuscate the data points with a value drawn uniformly at random from $\mathcal{R}$.  To execute the algorithm, we only 
need to keep the subset of $\mathcal{R}$ containing the values that have not been observed to this point and choose one of them at random for obfuscation.  Denote $\mathbb{P}_{\text{LOV}}$ as the probability that the obfuscated sequence has user $1$'s identifying pattern after applying the LOV algorithm.  For $l=1$: \[ \mathbb{P}_{\text{LOV}} \geq \sum_{k=0}^{r-1} {m \choose k} p_{\text{obf}}^k(1-p_{\text{obf}})^{m-k} \left(\frac{k}{r}\right) + \sum_{k=r}^{m} {m \choose k} p_{\text{obf}}^k(1-p_{\text{obf}})^{m-k},\] where $m$ is the length of the sequence.

The second DDO Algorithm, which we term \textit{Probabilistic Least-Observed Value (PLOV)}, is in some sense a generalization of the LOV algorithm that introduces more randomness in the operation. The intuition behind the obfuscation of PLOV is to give higher probability to the values that have appeared less so far. Specifically, at time $k$, define:

\[ \tilde{q_i}= \left(\frac{N^1_k(i)}{k} \right)^{\gamma},\]
where $0<\gamma$ is a design parameter. A typical value is $\gamma=\frac{1}{10}$. Now let
\[ q_i = \frac{\tilde{q_i}}{\sum_{j=1}^{r} \tilde{q_j}} \]
\[ q_{max} = \max \{ q_i, i=0,1,\cdots, r-1\}, \]
\[ q_{min} = \min \{ q_i, i=0,1,\cdots, r-1\}, \]
and choose
\[ b \leq \min \left( \frac{1}{rq_{max}-1}, \frac{r-1}{1-rq_{min}}\right) .\]

For example, we set $b =0.99 \min \left( \frac{1}{rq_{max}-1}, \frac{r-1}{1-rq_{min}}\right)$ in our experiments. The obfuscation probabilities are given by
\[p_i=\frac{1+b}{r}-b q_i, \ \ \ i=0,1,2, \cdots, r-1.\]
where $p_i$ is the conditional probability of obfuscating to $i$ ($Z_u(k+1)=i$), given that we are obfuscating $X_u(k+1)$.

The third DDO algorithm is termed \textit{Make a New Pattern (MANP)}, which chooses a value that completes as many patterns as possible with length $l$ that have not been observed to this point. Specifically, we choose the value $l=2$. 

Define $\mathfrak{P}_k$ as the total number of distinct patterns of length $l=2$ observed in the obfuscated sequence until time $k$, (i.e., in $Z(1), Z(2), \cdots, Z(k)$). 
Thus, $\mathfrak{P}_1=0$ and $\mathfrak{P}_2=1$.  Also, for $i \in \mathcal{R}$, define $\mathfrak{P}_{k+1}(i)$ as the value of $\mathfrak{P}_{k+1}$ given that $Z(k+1)=i$. Given we are obfuscating at time $k+1$, choose
\[Z(k+1) = \argmax{}\mathfrak{P}_{k+1}(i), \quad i \in \mathcal{R}.\]
These three DDO algorithms (LOV, PLOV, MANP) will be simulated in the next section.
\section{Complexity Analysis}
The time complexity and space complexity for each obfuscation algorithm (DIO and DDO) for each user's data sequence with length $m$ are shown in Table~\ref{complexity} based on the following assumptions:
\begin{itemize}
\item We assume searching/insertion time complexity for a specific element in/to a set with size $N$ is $O(\log N)$. 
\item We assume the sorting algorithm takes $O(N\log N)$ time complexity for an array with size $N$.
\end{itemize}
The time complexity for each of the two DIO algorithms is $O(m)$, since each obfuscates each data point based on the obfuscation sequence, which takes constant time. For LOV, it takes $O(\log r)$ time to check (search) if the current obfuscated data point is (or is not) a member of the letter set which have been seen before (with worst-case size $r$). For MANP, it needs to search in the set of patterns which have been seen before (with worst-case size $O(r^2)$ and search time $O(\log r^2) = O(\log r)$) for each candidate letter (with $r$ different choices in the data point set $\mathcal{R}$, $O(r\log r)$ in total). And it also takes $O(r \log r)$ for sorting the candidate letters by the order of their achievable patterns if used for obfuscating at the current data point. Lastly, it takes $O(h\log r^2) = O(h\log r)$ time complexity for inserting the new pattern list into the set of patterns previously observed after obfuscation (worst-case size $h$). For PLOV, it takes $O(r)$ complexity for each obfuscation operation due to the summation of the elements of vector $\tilde{q}$ with size $r$. For space complexity, the i.i.d., LOV and PLOV method each take $O(r)$ space for storing the data point set.  For the SL-SBU method, it takes $O(r^l)$ space to store the obfuscation sequence (De Bruijn sequence) with length $r^l$ (the number of all possible patterns); for MANP, it takes $O(r^2)$ space for storing the set of patterns which have been seen before (with worst-case size $r^2$) and $O(hr)$ space for storing the counting table for each letter's contribution to create new patterns (maximum size for each letter is equal to $h$, $O(hr)$ in total).

\begin{table}[h]
\caption{Time complexity and space complexity of each obfuscation method for each user's data sequence with length $m$.}
\begin{tabular}{|c|c|c|}
\hline
 DIO algorithms & i.i.d. & SL-SBU \\ \hline

 time complexity & $O(m)$ & $O(m)$ \\ \hline

 space complexity & $O(r)$ & $O(r^l)$ \\
\hline
\end{tabular}
\label{complexity}
\end{table}

\begin{table}[h]
\begin{tabular}{|c|c|c|c|}
\hline
 DDO algorithms & LOV & PLOV & MANP \\ \hline

 time complexity & $O(m\log r)$ & $O(m r)$ & $O(m \cdot \max\{r,h\} \cdot \log r)$ \\ \hline

 space complexity & $O(r)$ & $O(r)$  & $O(r \cdot \max\{r,h\})$ \\
\hline
\end{tabular}
\end{table}
\section{Numerical Results and Validation}
\label{sec:numerical}
We evaluate the performance of the proposed \textit{superstring-based obfuscation} (SBU) methods and the three \textit{data-dependent obfuscation} (DDO) algorithms on synthetic i.i.d. data sequences and on sequences from the Reality Mining dataset.  The data points in the i.i.d. data sequence for each user are drawn independently and identically from the data point set $\mathcal{R}$.  Reality Mining is a dataset released by the MIT Media Laboratory which tracks a group of $106$ (anonymized) mobile phone users~\cite{eagle2009inferring}.  The Reality Mining dataset contains traces of users' associated cell tower IDs across time. Here we further sample the data traces with sampling interval at least $10$ minutes to avoid significant data point repetition.

\subsection{Numerical Validation of Theorem 4}
First, we verify Theorem~$4$ numerically, as follows.  For each iteration, a pattern with length $l$ is generated at random, which means each letter in the pattern is independently and identically drawn from the data point alphabet $\mathcal{R}$.  Then, we generate the SL-SBU obfuscation and the i.i.d. obfuscation sequences randomly, and we record $Y$ and $X$ as the indices for the start of the first occurrence of the pattern in the SL-SBU obfuscation sequence and in the i.i.d. obfuscation sequence, respectively.  The numerical results are shown in Table~\ref{table3}.  As anticipated, the expectation of the location index where the pattern first occurs in the SL-SBU obfuscation sequence is approximately one-half of that of where it first occurs for the i.i.d. obfuscation sequence, which coincides with Theorem~$4$.  And, as expected, the likelihood of the event that $X$ is larger than $Y$ is greater than $0.5$.

\begin{table}[h]
\centering
\caption{Numerical validation of Theorem~$4$. $X$ and $Y$ denote the indices for the start of a randomly-generated pattern's first occurrence in the i.i.d. obfuscation sequence and SL-SBU obfuscation sequence, respectively.}
\begin{tabular}{{|}cccc{|}c{|}c{|}c{|}}
\hline
& $r$ & $l$ &$r^l$ &$\mathbb{E}(X)$ &$\mathbb{E}(Y)$ & $P(X>Y)$ \\ \hline

& 10 & 2  &100 & 100.52 & 51.45  & 0.6276  \\

& 20 & 2  &400 & 401.88 & 201.17 & 0.6315 \\

& 30 & 2  & 900 &902.41 & 450.55  & 0.6321   \\

& 40 & 2  & 1600 &1609.46 &802.11 & 0.6314    \\

& 50 & 2  & 2500 & 2515.09 & 1252.04 & 0.6329   \\ \hline \hline

& 10 & 3  &1000 & 1002.65  & 503.02  & 0.6303 \\

& 20 & 3  &8000  &8007.31  & 4008.64  & 0.6327 \\

& 30 & 3  &27000 & 27006.66 & 13503.91 & 0.6315 \\

& 40 & 3  &64000 & 64008.32 & 32003.53  & 0.6303 \\

& 50 & 3  &125000 & 125000.63  & 62502.33  & 0.6321 \\



\hline
\end{tabular}
\label{table3}
\end{table}

\subsection{Evaluation for SBU Obfuscation}
We consider the numerical evaluation of the achievable lower bounds, as given in Theorem~\ref{thm:simple-ppm-privacy-guarantee} and Theorem~\ref{second-thm} (SL-SBU, optimized version), for the fraction of sequences that contain a potentially identifying pattern of user $1$ when using the proposed SBU obfuscation approach.  We use $\epsilon$ and $\epsilon'$ to denote these two lower bounds, respectively, and the results are shown in Table~\ref{table2}.  Note that these are deterministic numerical evaluations of the bounds in Theorem~\ref{thm:simple-ppm-privacy-guarantee} and Theorem~\ref{second-thm}.  Hence, the results show that the proposed PPMs will result in a non-zero percentage of the user set $\mathcal{U}$ that have any potentially identifying pattern of user $1$ in their obfuscated sequences with high probability, and we can observe that the SL-SBU obfuscation sequence has a higher lower bound than the regular SBU obfuscation sequence.  As expected, increasing the data sequence length $m$ or the obfuscation noise level $p_{\text{obf}}$ will increase the chance of observing the pattern in the obfuscated sequences for both methods.  The advantage of the SL-SBU approach over the regular SBU approach becomes more significant as longer sequences are considered.

\begin{table}[h]
\centering
\caption{Numerical evaluation of the lower bounds of Theorem~\ref{thm:simple-ppm-privacy-guarantee} ($\epsilon$) and Theorem~\ref{second-thm} ($\epsilon'$) for the percentage of sequences that contain a user's identifying pattern when the proposed SBU obfuscation approaches are employed.} 
\begin{tabular}{{|}ccccc{|}c{|}c{|}}
\hline
$m$ & $r$ & $l$ & $h$ & $p_{\text{obf}}$ &lower bound $\epsilon$ &lower bound $\epsilon'$ \\
\hline
1000 & 20 & 3 & 10 & 10\% & 0.15\% & 0.45\% \\

1000 & 20 & 3 & 8 & 10\% & 0.12\% & 0.35\% \\

1000 & 20 & 3 & 10 & 15\% & 0.36\% & 1.06\% \\

1000 & 20 & 3 & 10 & 30\% & 1.07\% & 3.22\% \\

4000 & 20 & 3 & 10 & 10\% & 0.66\% & 1.98\% \\

10000 & 20 & 3 & 10 & 10\% & 1.69\% & 5.08\% \\ \hline \hline


1000 & 20 & 2 & 10 & 10\% & 7.12\% & 14.17\%\\

1000 & 20 & 2 & 8 & 10\% & 6.24\% & 12.41\%\\

1000 & 20 & 2 & 10 & 15\% & 13.47\% & 26.84\%\\

1000 & 20 & 2 & 10 & 30\% & 33.57\% & 67.02\%\\


2000 & 20 & 2 & 10 & 10\% & 14.84\% & 29.60\%\\

4000 & 20 & 2 & 10 & 10\% & 30.52\% & 60.97\%\\






\hline
\end{tabular}

\label{table2}
\end{table}

Next we test the effectiveness of the i.i.d. obfuscation and SL-SBU obfuscation approaches on synthetic i.i.d. sequences. We believe the i.i.d. sequences yield the worst case scenario: there is no dependency between any two consecutive data points that would lead to common subsequences of potentially identifying patterns being likely to be shared across users.  Furthermore, to consider (pessimistically) only patterns that are inserted via our obfuscation method (eliminating the possibility that a user trace already has the desired pattern), we make certain that a dataset $\mathcal{R}$ with size $r$ has a unique sequence by assigning user $1$ a unique pattern and drawing other users' sequences from the subset of the dataset with size $(r-l)$; for instance, if the pattern length is $l=3$, we insert a pattern $[r-2, r-1, r]$ into user $1$'s sequence at a random place for uniqueness.  We then follow the obfuscation procedure from Section~\ref{sec:framework} for each iteration and calculate the results by averaging the fraction of sequences which contain user $1$'s identifying pattern (by Definition~\ref{def:pattern}) for all iterations. The validation results for different parameter settings are shown in Tables~\ref{table4}, \ref{table5}, \ref{table6}. From the overall results, we can observe that the SL-SBU obfuscation sequence performs better than the i.i.d. obfuscation sequence, as predicted by Theorem~\ref{Theo5}. 

\begin{table}[h]
\centering
\caption{Simulation results for the case of i.i.d. data sequences drawn from an alphabet of size $r$, when using an i.i.d. sequence and the SL-SBU sequence for obfuscation: the fraction of sequences which contain user $1$'s identifying pattern ($[r-l+1,\ldots,r-1,r]$) for $h=10$ and $p_{\text{obf}} = 10\%$.}
\begin{tabular}{{|}ccccc{|}c{|}c{|}}
\hline
$m$ & $r$ & $l$ & $h$ & $p_{\text{obf}}$  & fraction (i.i.d.) & fraction (SL-SBU)\\
\hline





$10^3$ & 20 & 2 & 10 & 10\%  & 0.2185  & 0.7380 \\

$10^4$ & 20 & 2 & 10 & 10\%  & 0.9097 & 1 \\

$10^4$ & 20 & 3 & 10 & 10\%  & 0.1176 & 0.2571\\

$10^5$ & 20 & 3 & 10 & 10\%  & 0.6949 & 0.9598 \\ \hline \hline

$10^3$ & 30 & 2 & 10 & 10\%  & 0.1091 & 0.5853\\

$10^4$ & 30 & 2 & 10 & 10\%  & 0.6624 & 0.9999 \\

$10^5$ & 30 & 3 & 10 & 10\%  & 0.3042 & 0.7656 \\

$10^6$ & 30 & 3 & 10 & 10\%  & 0.9712 & 1 \\ \hline \hline

$10^3$ & 40 & 2 & 10 & 10\%  & 0.0666 & 0.4838 \\

$10^4$ & 40 & 2 & 10 & 10\%  & 0.4621  & 0.9983\\

$10^5$ & 40 & 3 & 10 & 10\%  & 0.1465 & 0.6010\\

$10^6$ & 40 & 3 & 10 & 10\%  & 0.7838 & 0.9999 \\ \hline \hline

$10^3$ & 50 & 2 & 10 & 10\%  & 0.0462  & 0.4142 \\

$10^4$ & 50 & 2 & 10 & 10\%  & 0.3301  & 0.9913  \\

$10^5$ & 50 & 3 & 10 & 10\%  & 0.0808  & 0.4937 \\

$10^6$ & 50 & 3 & 10 & 10\%  & 0.5412 & 0.9994 \\

\hline
\end{tabular}

\label{table4}
\end{table}

\begin{table}[h]
\centering
\caption{Simulation results for the case of i.i.d. data sequences drawn from an alphabet of size $r$, when using an i.i.d. sequence and the SL-SBU sequence for obfuscation: the fraction of sequences which contain user $1$'s identifying pattern ($[r-l+1,\ldots,r-1,r]$) for $h=5$ and $p_{\text{obf}} = 10\%$.}
\begin{tabular}{{|}ccccc{|}c{|}c{|}}
\hline
$m$ & $r$ & $l$ & $h$ & $p_{\text{obf}}$  & fraction (i.i.d.) & fraction (SL-SBU)\\
\hline





$10^3$ & 20 & 2 & 5 & 10\%  & 0.1223   & 0.3733  \\

$10^4$ & 20 & 2 & 5 & 10\%  & 0.7078  & 0.9932  \\

$10^4$ & 20 & 3 & 5 & 10\%  & 0.0370  & 0.0391  \\

$10^5$ & 20 & 3 & 5 & 10\%  & 0.2683  & 0.2961  \\ \hline \hline

$10^3$ & 30 & 2 & 5 & 10\%  & 0.0607 & 0.2585 \\

$10^4$ & 30 & 2 & 5 & 10\%  & 0.4268  & 0.9587  \\

$10^5$ & 30 & 3 & 5 & 10\%  & 0.0949  & 0.1194  \\

$10^6$ & 30 & 3 & 5 & 10\%  & 0.5891  & 0.7174  \\ \hline \hline

$10^3$ & 40 & 2 & 5 & 10\%  & 0.0383  & 0.1976  \\

$10^4$ & 40 & 2 & 5 & 10\%  & 0.2719  & 0.8846 \\

$10^5$ & 40 & 3 & 5 & 10\%  & 0.0438 & 0.0646 \\

$10^6$ & 40 & 3 & 5 & 10\%  & 0.3271 & 0.4724  \\ \hline \hline

$10^3$ & 50 & 2 & 5 & 10\%  & 0.0277  & 0.1616  \\

$10^4$ & 50 & 2 & 5 & 10\%  & 0.1868  & 0.8020  \\

$10^5$ & 50 & 3 & 5 & 10\%  & 0.0268  & 0.0429 \\

$10^6$ & 50 & 3 & 5 & 10\%  & 0.1840  & 0.3170 \\

\hline
\end{tabular}

\label{table5}
\end{table}

\begin{table}[h]
\centering
\caption{Simulation results for the case of i.i.d. data sequences drawn from an alphabet of size $r$, when using an i.i.d. sequence and the SL-SBU sequence for obfuscation:  the fraction of sequences which contain user $1$'s identifying pattern ($[r-l+1,\ldots,r-1,r]$) for $h=10$, $p_{\text{obf}} = 5\%$.}
\begin{tabular}{{|}ccccc{|}c{|}c{|}}
\hline
$m$ & $r$ & $l$ & $h$ & $p_{\text{obf}}$  & fraction (i.i.d.) & fraction (SL-SBU)\\
\hline





$10^3$ & 20 & 2 & 10 & 5\%  & 0.0673 & 0.2203 \\

$10^4$ & 20 & 2 & 10 & 5\%  & 0.4622 & 0.9255 \\

$10^4$ & 20 & 3 & 10 & 5\%  & 0.0235 & 0.0259 \\

$10^5$ & 20 & 3 & 10 & 5\%  & 0.1502 & 0.1758 \\ \hline \hline

$10^3$ & 30 & 2 & 10 & 5\%  & 0.0358 & 0.1497 \\

$10^4$ & 30 & 2 & 10 & 5\%  & 0.2454 & 0.7885 \\

$10^5$ & 30 & 3 & 10 & 5\%  & 0.0539 & 0.0758 \\

$10^6$ & 30 & 3 & 10 & 5\%  & 0.3693 & 0.5194 \\ \hline \hline

$10^3$ & 40 & 2 & 10 & 5\%  & 0.0241 & 0.1147 \\

$10^4$ & 40 & 2 & 10 & 5\%  & 0.1509 & 0.6639 \\

$10^5$ & 40 & 3 & 10 & 5\%  & 0.0274 & 0.0444 \\

$10^6$ & 40 & 3 & 10 & 5\%  & 0.1770 & 0.3148 \\ \hline \hline

$10^3$ & 50 & 2 & 10 & 5\%  & 0.0187 & 0.0930 \\

$10^4$ & 50 & 2 & 10 & 5\%  & 0.1025 & 0.5736 \\

$10^5$ & 50 & 3 & 10 & 5\%  & 0.0184 & 0.0314 \\

$10^6$ & 50 & 3 & 10 & 5\%  & 0.1015 & 0.2150 \\

\hline
\end{tabular}

\label{table6}
\end{table}

\subsection{Evaluation of the Data-Dependent and Data-Independent Obfuscation Algorithms}

Next we consider the simulation of the proposed \textit{data-independent obfuscation} (DIO) obfuscation methods (SL-SBU and i.i.d. obfuscation sequences) and the three \textit{data-dependent obfuscation} (DDO) algorithms (LOV, PLOV, MANP) on i.i.d. data sequences and the Reality Mining dataset. 
Recall the three DDO obfuscation algorithms' design: the LOV algorithm chooses the obfuscation value which has not been observed in the user's obfuscated sequence before; the PLOV algorithm selects obfuscating values that have been less-observed in the user's obfuscated sequence with higher probability; MANP chooses the obfuscating letter which completes the most previously unobserved patterns with length $l=2$.


Fig.~\ref{fig:perform_l1_Pobf_iid} and Fig.~\ref{fig:perform_l1_Pobf_RealityMining} show the performance comparison of the obfuscation algorithms on i.i.d. sequences and the Reality Mining dataset, respectively, for the pattern length $l=1$. In this case, the LOV algorithm has the best performance on both i.i.d. sequences and dataset sequences, although all three DDO algorithms achieve very good performance using very low obfuscation probability $p_{\text{obf}}$.  For the two DIO algorithms, their performance is not affected by the data sequence's type.  The SL-SBU obfuscation method's performance is better than the i.i.d. obfuscation method for both i.i.d. sequences and the Reality Mining dataset: the SL-SBU obfuscation method can achieve a fraction of nearly $0.90$ with $p_{\text{obf}} = 0.02$, while the i.i.d. obfuscation method achieves a fraction around $0.60$ fraction with $p_{\text{obf}} = 0.02$.

Fig.~\ref{fig:perform_l2_Pobf_iid} shows the performance comparison of the obfuscation algorithms on i.i.d. sequences with pattern length $l=2$.  In this case, the PLOV algorithm and MANP algorithm have similar performance, while the LOV algorithm has poorer performance.  For the two DIO algorithms, the SL-SBU obfuscation method's performance is better than that of the i.i.d. obfuscation method for both i.i.d. sequences and the Reality Mining dataset.  For the Reality Mining dataset validation results, as shown in Fig.~\ref{fig:perform_l2_Pobf_RealityMining}, MANP's performance is poorer on the realistic dataset compared to the i.i.d. sequence due to the sparseness and repetition of the data points in the real data sequences.  For the two DIO algorithms, their performance is again not affected by the data sequence's type. The SL-SBU obfuscation method's performace is better than the i.i.d. obfuscation method for both i.i.d. sequences and the Reality Mining dataset: the SL-SBU obfuscation method can achieve a fraction of nearly $0.70$ with $p_{\text{obf}} = 0.10$, while the i.i.d. obfuscation method achieves a fraction around $0.20$ with $p_{\text{obf}} = 0.10$.

Fig.~\ref{fig:perform_l3_Pobf} (a) and Fig.~\ref{fig:perform_l3_Pobf} (b) show the performance of the obfuscation algorithms on i.i.d. sequences and the Reality Mining dataset, respectively, when the pattern length $l=3$. In this case, PLOV has the best performance among the three DDO algorithms. For the two DIO algorithms, the SL-SBU obfuscation has the best performance over all of other methods either on i.i.d. sequences (for large enough $p_{\text{obf}}$) or Reality Mining sequences, hence showing its robustness across different data sequence types.  The reason is that the SL-SBU obfuscation can deterministically create all patterns without cooperating with the obfuscated data points and will eventually achieve any pattern if the sequence is long enough or the obfuscation probability $p_{\text{obf}}$ is high enough. In contrast, none of the DDO algorithms are designed specifically for pattern length above $2$.  Overall, we can observe that all three DDO algorithms have poorer performance on the Reality Mining dataset than on the i.i.d. sequence.

\begin{figure}
     \centering
     \begin{subfigure}[b]{0.45\textwidth}
         \centering
         \includegraphics[width=\textwidth]{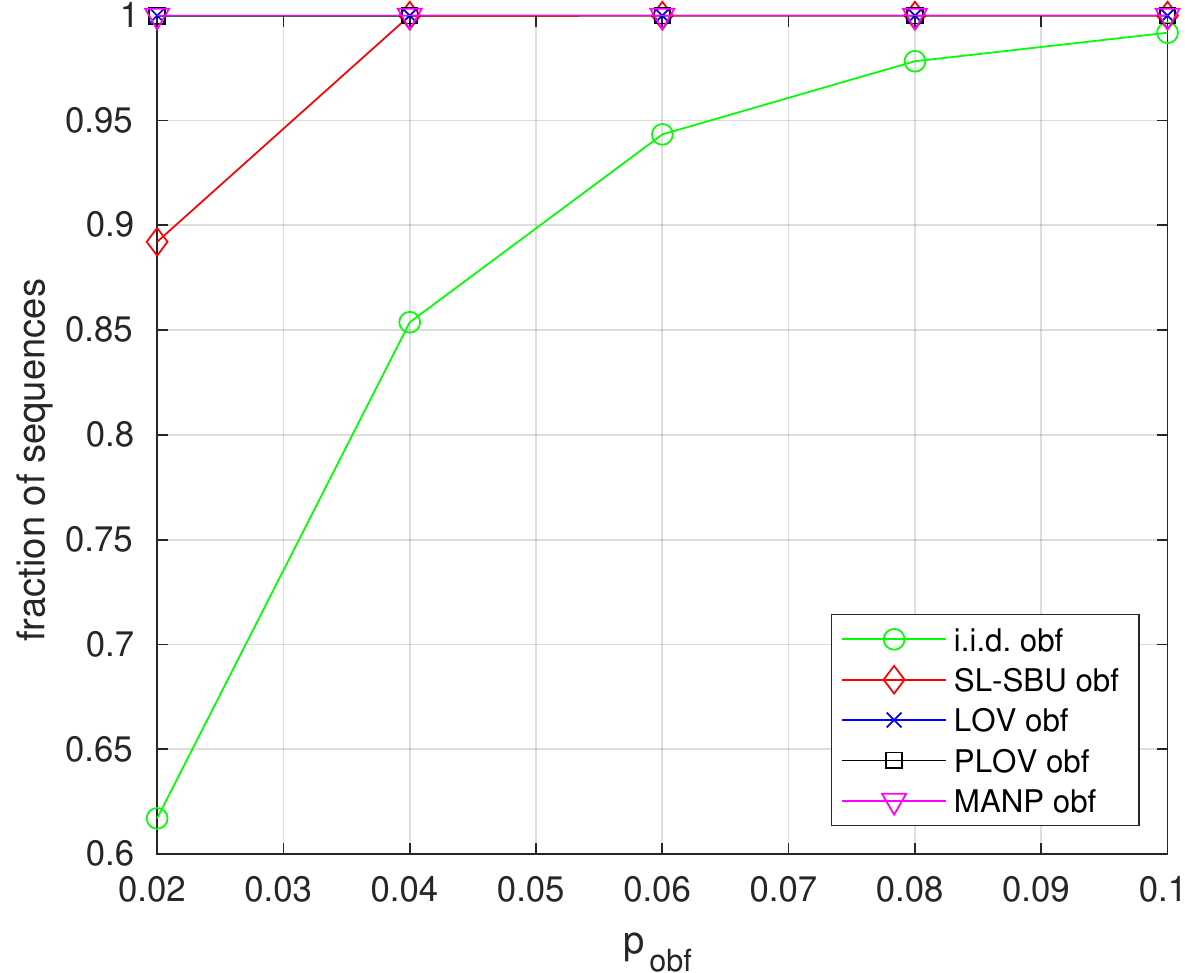}
         \caption{$p_{\text{obf}}=0.02:0.02:0.1$}
         \label{fig:perform_l1_Pobf_0dot02_0dot1_iid}
     \end{subfigure}
     \hfill
     \begin{subfigure}[b]{0.45\textwidth}
         \centering
         \includegraphics[width=\textwidth]{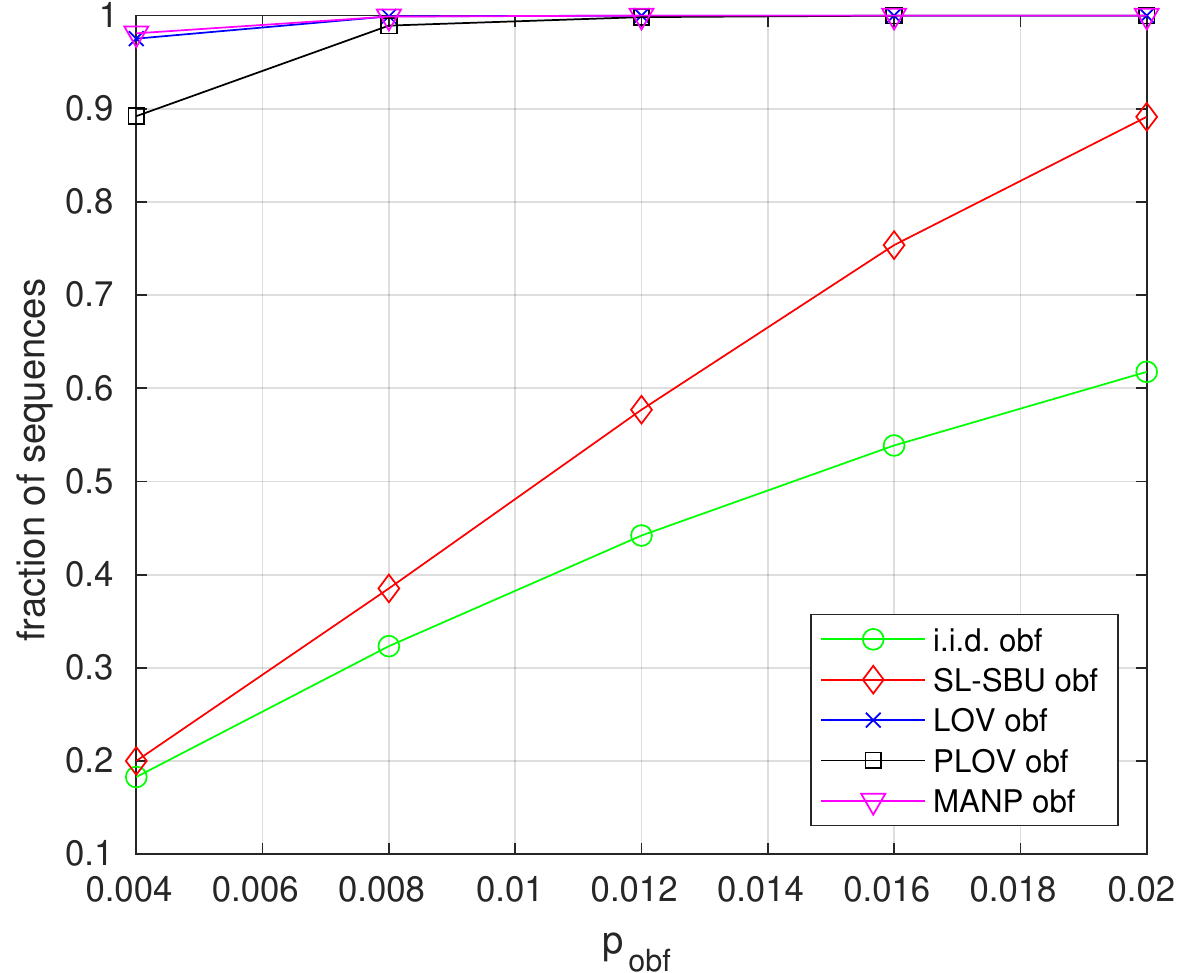}
         \caption{$p_{\text{obf}}=0.004:0.004:0.02$}
         \label{fig:perform_l1_Pobf_0dot004_0dot02_iid}
     \end{subfigure}
        \caption{Performance comparison of data-dependent obfuscation (DDO) methods (LOV, PLOV, MANP) and data-independent obfuscation methods (SL-SBU and i.i.d. obfuscation sequences) on i.i.d. sequences: the fraction of sequences which contain user $1$'s identifying pattern ($[r-l+1,\ldots,r-1,r]$). $r=20+l$, $l=1$, $m=1000$, $h=10$. Since here $l=1$, $h$ is only used for the MANP algorithm for creating the patterns within the distance requirement.}
        \label{fig:perform_l1_Pobf_iid}
\end{figure}

\begin{figure}
     \centering
     \begin{subfigure}[b]{0.45\textwidth}
         \centering
         \includegraphics[width=\textwidth]{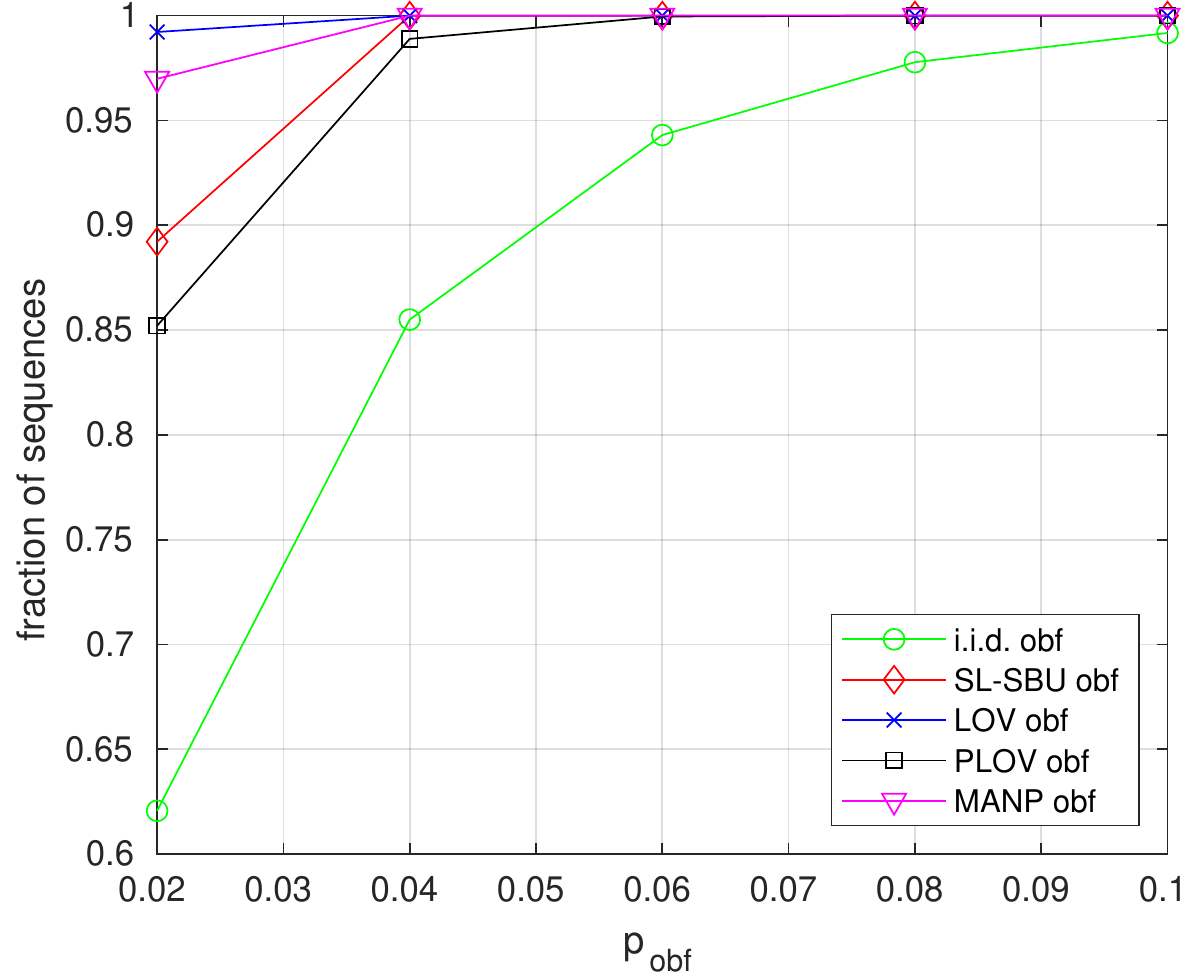}
         \caption{$p_{\text{obf}}=0.02:0.02:0.1$}
         \label{fig:perform_l1_Pobf_0dot02_0dot1_RealityMining}
     \end{subfigure}
     \hfill
     \begin{subfigure}[b]{0.45\textwidth}
         \centering
         \includegraphics[width=\textwidth]{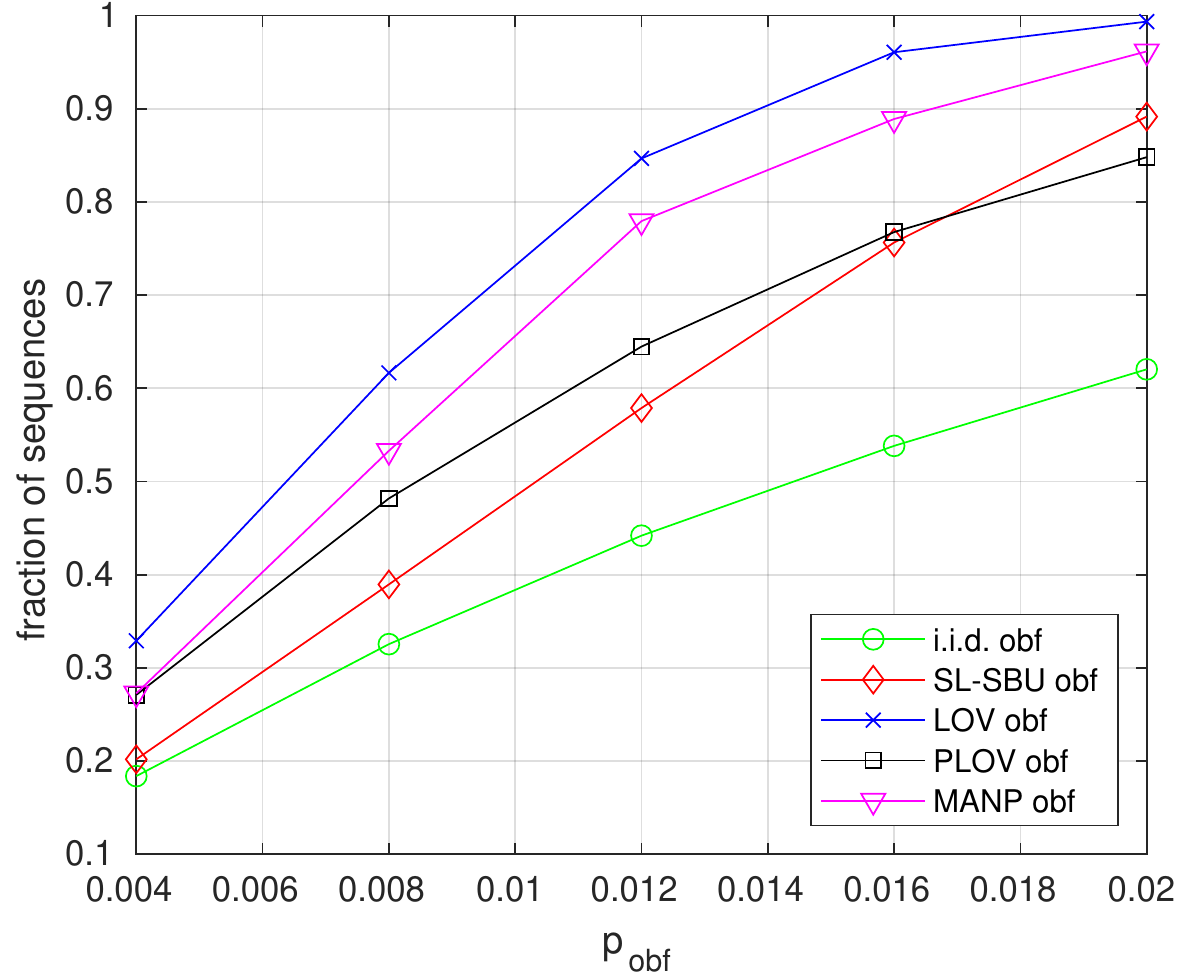}
         \caption{$p_{\text{obf}}=0.004:0.004:0.02$}
         \label{fig:perform_l1_Pobf_0dot004_0dot02_RealityMining}
     \end{subfigure}
        \caption{Performance comparison of data-dependent obfuscation (DDO) methods (LOV, PLOV, MANP) and data-independent obfuscation methods (SL-SBU and i.i.d. obfuscation sequences) on the Reality Mining dataset: the fraction of sequences which contain user $1$'s identifying pattern ($[r-l+1,\ldots,r-1,r]$). $r=20+l$, $l=1$, $m=1000$, $h=10$. Here $h$ is only used for the MANP algorithm for creating the patterns within the distance requirement. Data points are sampled with interval at least $10$ minutes.}
        \label{fig:perform_l1_Pobf_RealityMining}
\end{figure}

\begin{figure}
     \centering
     \begin{subfigure}[b]{0.45\textwidth}
         \centering
         \includegraphics[width=\textwidth]{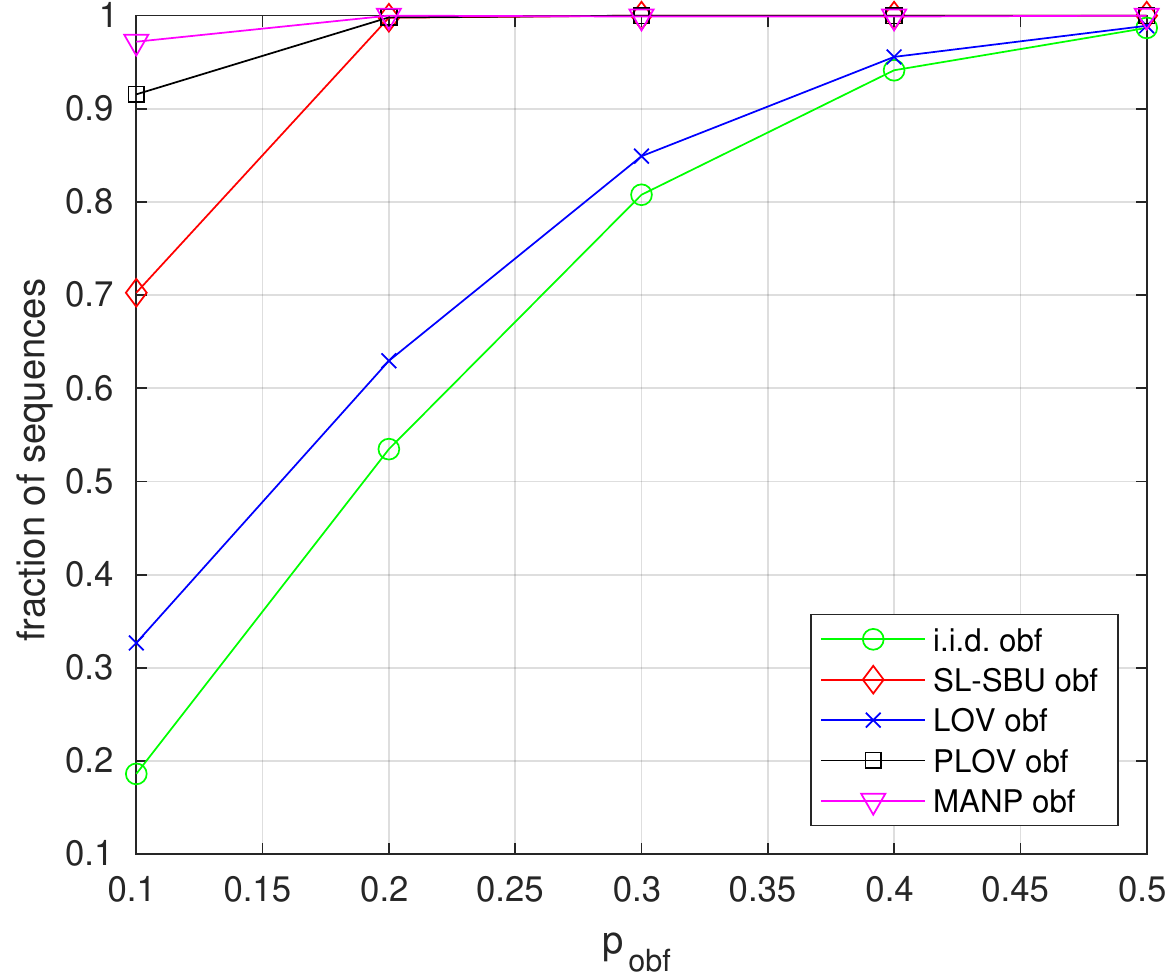}
         \caption{$p_{\text{obf}}=0.1:0.1:0.5$}
         \label{fig:perform_l2_Pobf_0dot1_0dot5_iid}
     \end{subfigure}
     \hfill
     \begin{subfigure}[b]{0.45\textwidth}
         \centering
         \includegraphics[width=\textwidth]{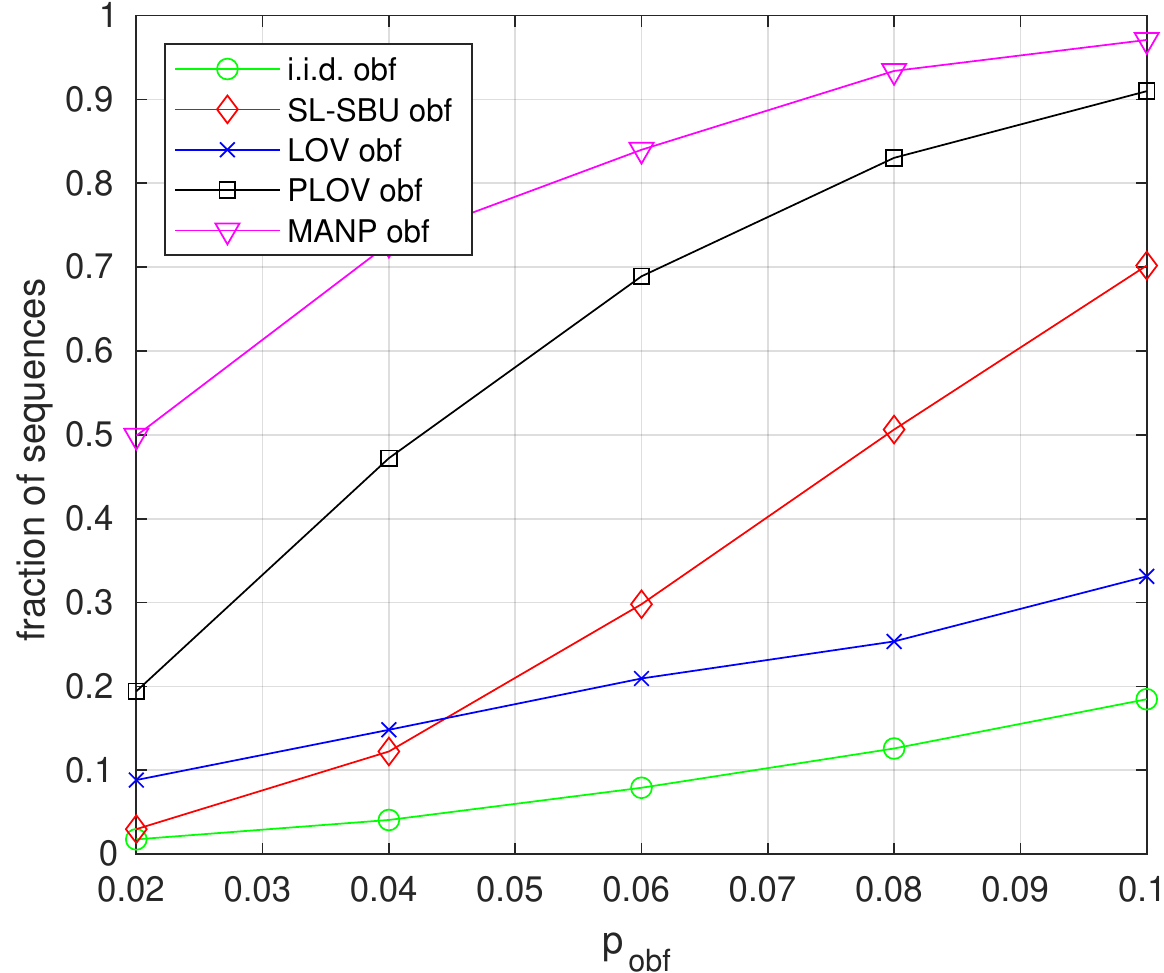}
         \caption{$p_{\text{obf}}=0.02:0.02:0.1$}
         \label{fig:perform_l2_Pobf_0dot02_0dot1_iid}
     \end{subfigure}
        \caption{Performance comparison of data-dependent obfuscation (DDO) methods (LOV, PLOV, MANP) and data-independent obfuscation methods (SL-SBU and i.i.d. obfuscation sequences) on i.i.d. sequences: the fraction of sequences which contain user $1$'s identifying pattern ($[r-l+1,\ldots,r-1,r]$). $r=20+l$, $l=2$, $m=1000$, $h=10$.}
        \label{fig:perform_l2_Pobf_iid}
\end{figure}

\begin{figure}
     \centering
     \begin{subfigure}[b]{0.45\textwidth}
         \centering
         \includegraphics[width=\textwidth]{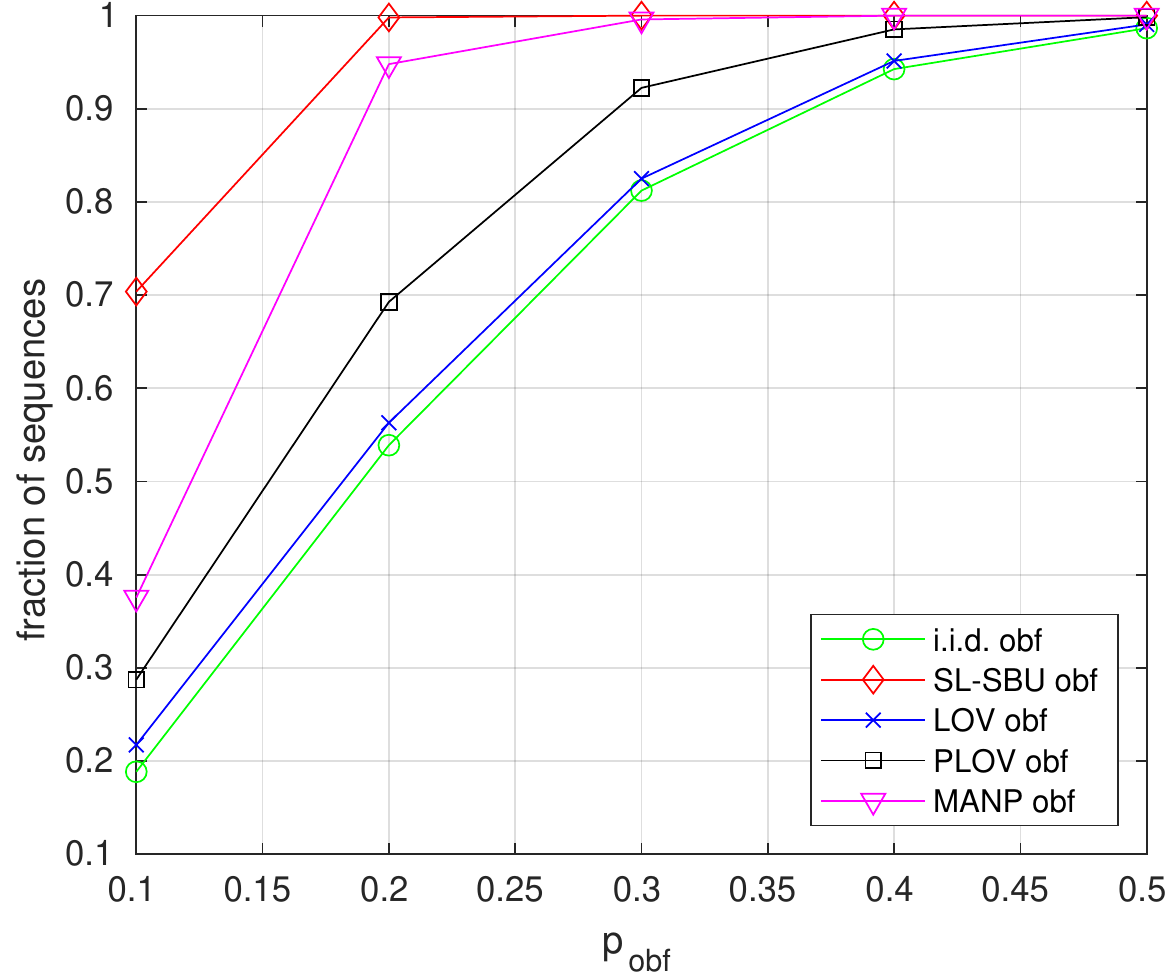}
         \caption{$p_{\text{obf}}=0.1:0.1:0.5$}
         \label{fig:perform_l2_Pobf_0dot1_0dot5_RealityMining}
     \end{subfigure}
     \hfill
     \begin{subfigure}[b]{0.45\textwidth}
         \centering
         \includegraphics[width=\textwidth]{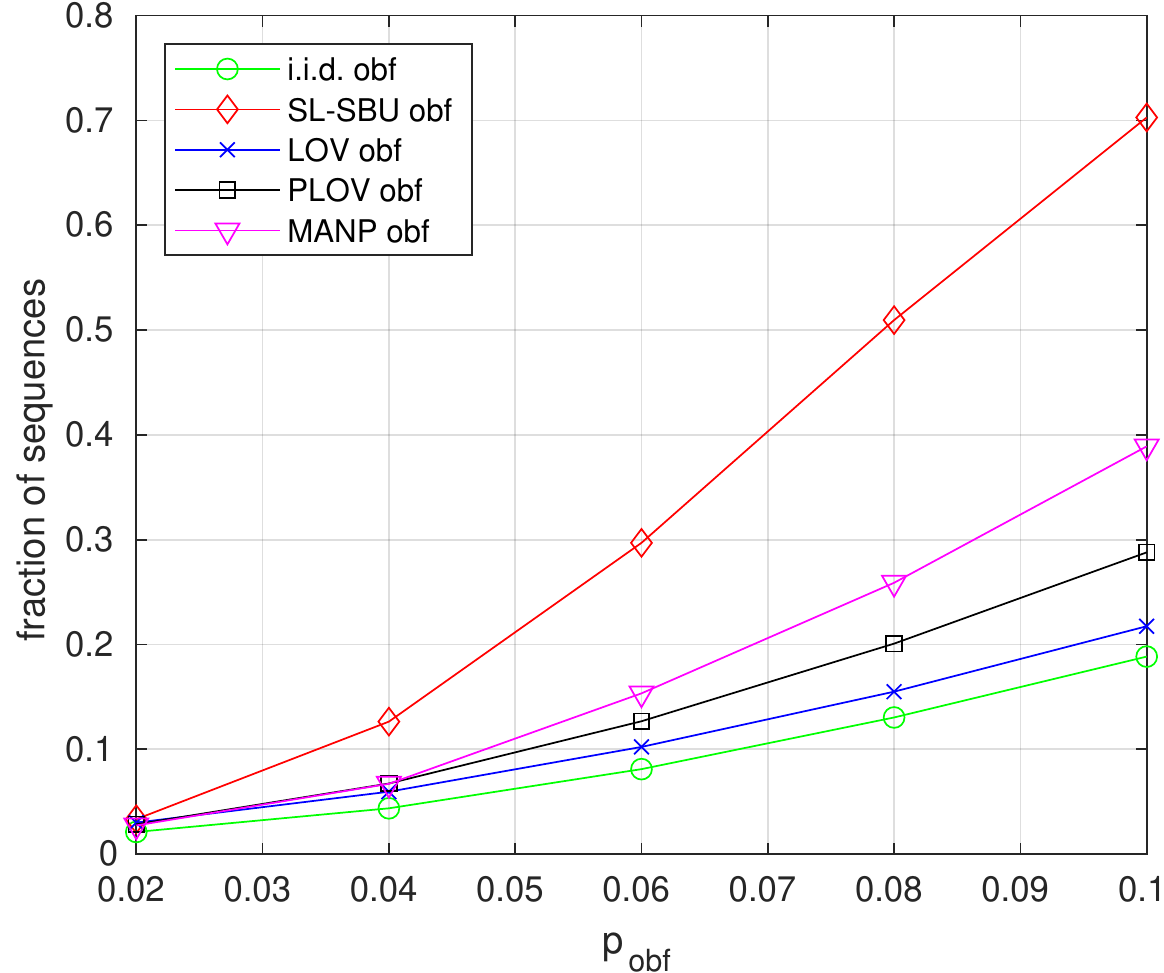}
         \caption{$p_{\text{obf}}=0.02:0.02:0.1$}
         \label{fig:perform_l2_Pobf_0dot02_0dot1_RealityMining}
     \end{subfigure}
        \caption{Performance comparison of data-dependent obfuscation (DDO) methods (LOV, PLOV, MANP) and data-independent obfuscation methods (SL-SBU and i.i.d. obfuscation sequences) on the Reality Mining dataset: the fraction of sequences which contain user $1$'s identifying pattern ($[r-l+1,\ldots,r-1,r]$). $r=20+l$, $l=2$, $m=1000$, $h=10$. Data points are sampled with interval at least $10$ minutes.}
        \label{fig:perform_l2_Pobf_RealityMining}
\end{figure}

\begin{figure}
     \centering
     \begin{subfigure}[b]{0.45\textwidth}
         \centering
         \includegraphics[width=\textwidth]{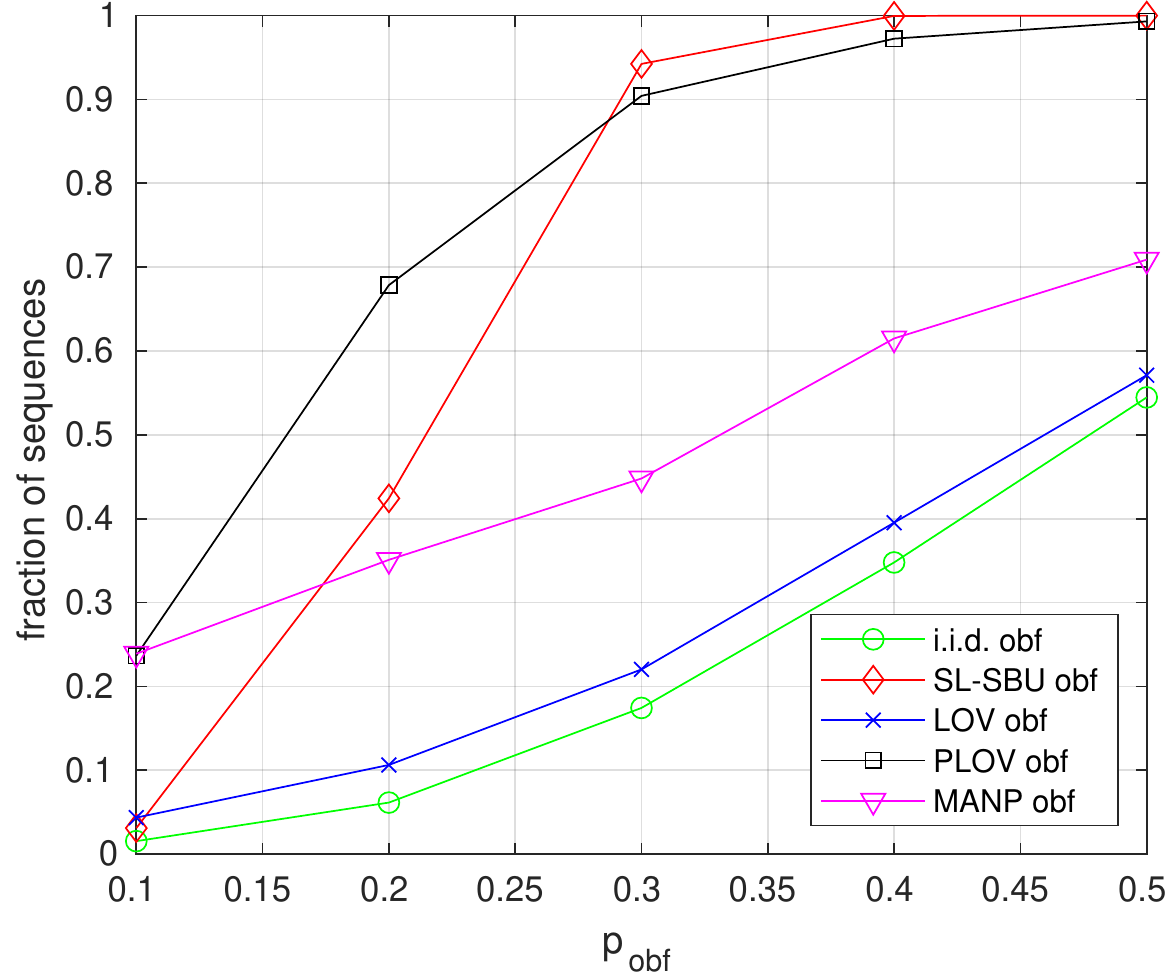}
         \caption{Performance on i.i.d. sequences}
         \label{fig:perform_l3_Pobf_0dot1_0dot5_iid}
     \end{subfigure}
     \hfill
     \begin{subfigure}[b]{0.45\textwidth}
         \centering
         \includegraphics[width=\textwidth]{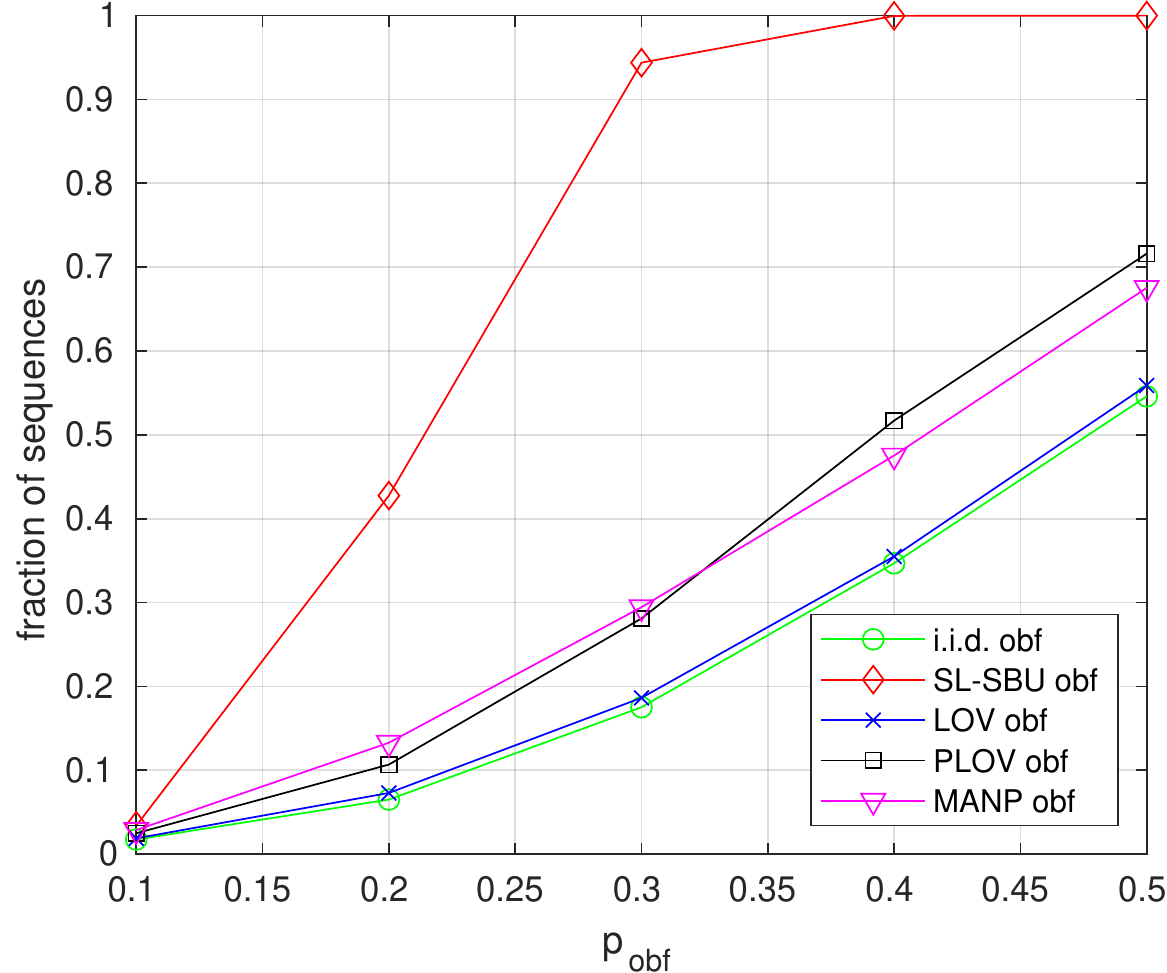}
         \caption{Performance on Reality Mining sequences}
         \label{fig:perform_l3_Pobf_0dot1_0dot5_RealityMining}
     \end{subfigure}
        \caption{Performance comparison of data-dependent obfuscation (DDO) methods (LOV, PLOV, MANP) and data-independent obfuscation methods (SL-SBU and i.i.d. obfuscation sequences) on i.i.d. sequences and Reality Mining sequences (data points are sampled with interval at least $10$ minutes): the fraction of sequences which contain user $1$'s identifying pattern ($[r-l+1,\ldots,r-1,r]$). $r=20+l$, $l=3$, $m=1000$, $h=10$.}
        \label{fig:perform_l3_Pobf}
\end{figure}


\section{Conclusion}
\label{conclusion}
Various privacy-preserving mechanisms (PPMs) have been proposed to improve users' privacy in User Data-Driven (UDD) services.  To thwart pattern-matching attacks, we present data-independent and data-dependent PPMs that do not depend on a statistical model of users' data. In particular, a small noise is added to users' data in a way that the obfuscated data sequences are likely to have a large number of potential patterns; thus, for any user and for any potential pattern that the adversary might have to identify that user, we have shown that there will be a large number of other users with the same data pattern in their obfuscated data sequences.  We validate the proposed methods on both synthetic data and the Reality Mining dataset to demonstrate their utility and compare their performance.

\appendices
\FloatBarrier
\bibliographystyle{IEEEtran}
\bibliography{ref}
\end{document}